\newcommand{\comment}[1]{{}}
\newcommand\h{\mathcal{H}}
\mathchardef\mhyphen="2D
\newcounter{thm}
\newcounter{def}
\theoremstyle{definition}
\theoremstyle{plain}
\newtheorem{prop}[def]{Proposition}
\newtheorem{theo}[thm]{Theorem}
\newtheorem{lemm}[def]{Lemma}
\begin{document}

\title{Supplemental Materials: Certifying bipartite pure quantum states efficiently using untrusted devices}% Force line breaks with \\

\author{Lijinzhi Lin$^{1,2}$}
\author{Zhenyu Chen$^{2}$}
\author{Xiaodie Lin$^{2}$}
\author{Zhaohui Wei$^{3,4,}$}\email{weizhaohui@gmail.com}
\affiliation{$^{1}$Department of Computer Science and Technology, Tsinghua University, Beijing 100084, China\\$^{2}$Institute for Interdisciplinary Information Sciences, Tsinghua University, Beijing 100084, China\\$^{3}$Yau Mathematical Sciences Center, Tsinghua University, Beijing 100084, China\\$^{4}$Yanqi Lake Beijing Institute of Mathematical Sciences and Applications, Beijing 101407, China}

\begin{abstract}
	In the following, we provide the proofs for the theorems in the main text. {Particularly, we will prove the robustness of our protocols, and elaborate on the details of the correlation rescaling technique we develop for certifying general bipartite pure quantum states.}
\end{abstract}

\maketitle
\setcounter{figure}{0}
\renewcommand{\thefigure}{S\arabic{figure}}
\appendix
\section{Certifying the maximally entangled pure states with $2d\times 2d$ correlations}
\label{appendix:a}

    We define
    \begin{align}\label{correlation:maximal}
    		P=
    		\frac{1}{d(d+1)^2}\begin{pmatrix}
    			d^2 I_d & \vec{e}\vec{e}^T \\
    			\vec{e}\vec{e}^T & I_d \\
    		\end{pmatrix},
	\end{align}
	where $\vec{e}$ is the $d$-dimensional all-one column vector and $I_d$ is the identity matrix. If the smallest PSD factorization for $P$ is of size $r\times r$, we say that its PSD rank is $r$, and the lower right submatrix $\frac{1}{d(d+1)^2}I_d$ of $P$ suggests that $r\geq d$.
    \begin{theo}\label{thm:maximal}
    If Alice and Bob generate $P$ in Eq.\eqref{correlation:maximal} by locally measuring a $d\times d$ bipartite quantum state $\rho$, then $\rho$ must be a maximally entangled pure state.
    \end{theo}

%    \red{Suppose $\rho$ can generate $P$ as in         Eq.\eqref{correlation:maximal}, then it has been known that if $\rho$ is a $r\times r$ bipartite quantum state having $\lambda_i$($i\in[r]$) as the spectrum of the reduced state $\mathrm{tr}_A(\rho)$ (or, by symmetry, $\mathrm{tr}_B(\rho)$), then there must exist $r\times r$ positive semidefinite (PSD) matrices $C_1, C_2, \ldots, C_{2d}, D_1, D_2, ..., D_{2d}$ such that (1) $P_{xy}=\mathrm{tr}(C_x D_y)$ for any $x,y\in[2d]$, and (2) $\sum_{k=1}^{2d}C_k=\sum_{l=1}^{2d}D_l=\Lambda=\text{diag}(\sqrt{\lambda_1},...,\sqrt{\lambda_r})$, where {$\lambda_1\geq \cdots\geq \lambda_r$ and} $\text{diag}(a_1,a_2,...,a_n)$ is a diagonal matrix with the diagonal entries $a_i$'s~\cite{chen2023generations}. This kind of factorization is called a canonical form of PSD factorization for $P$ of size $r\times r$. If only Condition (1) is satisfied, we say it is a PSD factorization.
%    The PSD rank $t$ of $P$ is defined as the minimum number $t$ such that $P$ admits a PSD factorization of size $t\times t$.}

    We first introduce several useful concepts. For an arbitrary $n\times n$ correlation $Q$, if there exist $r\times r$ positive semidefinite (PSD) matrices $C_1, C_2, \ldots, C_{n}, D_1, D_2, ..., D_{n}$ such that $Q_{xy}=\mathrm{tr}(C_x D_y)$ for any $x,y\in[n]$, we say that $\{C_x\}$ and $\{D_y\}$ form a PSD factorization for $Q$. If we further have $\sum_{x=1}^{n}C_x=\sum_{y=1}^{n}D_y=\Lambda=\text{diag}(a_1,a_2,...,a_r)$, where $\text{diag}(a_1,a_2,...,a_r)$ is a diagonal matrix with the descending diagonal entries $a_i$'s, then we say that $\{C_x\}$ and $\{D_y\}$ form a canonical form of PSD factorization for $Q$. The minimum number $s$ such that $Q$ admits a PSD factorization of size $s\times s$ is called the PSD rank of $Q$.

    {Suppose $\ket{\psi}$ is a $t\times t$ bipartite pure quantum state, and its Schmidt coefficients are $\sqrt{\lambda_1}\geq\sqrt{\lambda_1}\geq...\geq\sqrt{\lambda_t}$. Then it has been known that $Q$ can be generated by locally measuring $\ket{\psi}$, if and only if $Q$ admits a canonical form of PSD factorization with $\Lambda=\text{diag}(\sqrt{\lambda_1},...,\sqrt{\lambda_t})$~\cite{chen2023generations}. Accordingly, if $Q$ can be generated by locally measuring a $t\times t$ mixed quantum state $\sigma$, and the spectrum of the reduced state $\mathrm{tr}_A(\sigma)$ (or, by symmetry, $\mathrm{tr}_B(\sigma)$) is $a_i$($i\in[t]$), then by considering purifications of $\sigma$ we can see that $Q$ admits a canonical form of PSD factorization with $\Lambda=\text{diag}(\sqrt{a_1},...,\sqrt{a_t})$.}

    %\red{Therefore, we now know that if $\rho$ can generate $P$ as in Eq.\eqref{correlation:maximal}, the spectrum of $\mathrm{tr}_A(\rho)$ (or $\mathrm{tr}_B(\rho)$) corresponds a canonical form of PSD factorization for $P$.}

    Then one can prove the above theorem by first showing that the canonical form of PSD
    factorization for $P$ of size $d\times d$ is unique with $\Lambda=I_d/\sqrt{d}$, and then proving that $\rho$ must be pure. For this, we need to examine all possible PSD factorizations for $P$ of size $d\times d$. It turns out the following lemma will be very useful for us. %Given such a PSD factorization, we denote the matrices therein by $\{C_1,C_2,...,C_{2d}\}$ and $\{D_1,D_2,...,D_{2d}\}$.

    %\red{For a general correlation $Q$, we have the following conclusion on PSD factorizations.} %the following lemma extracts the essential underlying information of a PSD factorization $\{C_x\}$ and $\{D_y\}$.
	\begin{lemm}
		\label{correlation:scsd}
		Let $Q$ be a correlation indexed by $x$ and $y$.
		Let $r$ be the PSD rank of $Q$.
		Let $\{C_x\}$, $\{D_y\}$ and $\{C'_x\}$, $\{D'_y\}$ be two $r\times r$ PSD factorizations of $Q$.
		We say that the above two PSD factorizations are equivalent if there exists an $r\times r$ invertible matrix $H$ such that
		\begin{align*}
			H C_x H^{\dagger}= & C'_x, \\
			(H^{\dagger})^{-1} D_y H^{-1}= & D'_y.
		\end{align*}
		Then:
		\begin{enumerate}
			\item
				If the equivalence holds, then the characteristic polynomial (hence also spectrum) of $\big(\sum_x C_x\big)\big(\sum_y D_y\big)$
				is equal to that of $\big(\sum_x C'_x\big)\big(\sum_y D'_y\big)$.
			\item
				Any PSD factorization $\{C_x\}$, $\{D_y\}$ of $Q$ is equivalent to a canonical form of PSD factorization $\{C''_x\}, \{D''_y\}$
				which satisfies
				\begin{align}
					\sum_{x} C''_x= & \sum_{y} D''_y= \mathrm{diag}(\sqrt{\lambda_1},\cdots,\sqrt{\lambda_r}),
				\end{align}
				where $\lambda_i$ are the eigenvalues of $\big(\sum_x C_x\big)\big(\sum_y D_y\big)$.
				{For convenience the matrix $\mathrm{diag}(\sqrt{\lambda_1},\cdots,\sqrt{\lambda_r})$ is henceforth denoted as $\Lambda$.}
		\end{enumerate}
	\end{lemm}
	
	\begin{proof}
    	\begin{enumerate}
			\item
            Follows directly from the definition of the equivalence relation and the cyclic property of characteristic polynomial.
			\item
				Let $S_C=\sum_x C_x$ and $S_D=\sum_y D_y$.
				We first show that $S_C$ and $S_D$ both have full rank.
				{
				Without loss of generality, suppose $S_C$ does not have full rank.
				This means that there exists an isometry $V:\mathbb{C}^{r-1}\to\mathbb{C}^{r}$ and a PSD operator $S'_C\in\mathbb{C}^{(r-1)\times(r-1)}$ such that $V S'_C V^{\dagger}=S_C$.
				For any $x$ there also exists a PSD operator $C'_x\in\mathbb{C}^{(r-1)\times(r-1)}$ such that $VC'_x V^{\dagger}=C_x$,
				since the image of $C_x$ is contained in that of $S_C$.
				Then for any $x$ and $y$ we have
				\begin{align*}
					\mathrm{tr}(C_x D_y)=\mathrm{tr}(VC'_xV^{\dagger} D_y)=\mathrm{tr}(C'_x (V^{\dagger}D_y V)),
				\end{align*}
				which implies the existence of a $(r-1)$-sized PSD decomposition for $Q$, contradiction.
				Thus $S_C$ and $S_D$ are both invertible.
				}

				Now consider
				\begin{align*}
					\hat{C}''_x= & \left(S_C^{1/2} S_D S_C^{1/2}\right)^{1/4}\left(S_C^{-1/2} C_x S_C^{-1/2}\right)\left(S_C^{1/2}S_D S_C^{1/2}\right)^{1/4}, \\
					\hat{D}''_y= & \left(S_C^{1/2} S_D S_C^{1/2}\right)^{-1/4}\left(S_C^{1/2} D_y S_C^{1/2}\right)\left(S_C^{1/2}S_D S_C^{1/2}\right)^{-1/4}.
				\end{align*}
				Then $\{\hat{C}''_x\}$ and $\{\hat{D}''_y\}$ form a PSD factorization of $Q$
				that is equivalent to $\{C_x\}$ and $\{D_y\}$;
				furthermore, we have $\sum_x \hat{C}''_x=\sum_y \hat{D}''_y=\left(S_C^{1/2} S_D S_C^{1/2}\right)^{1/2}$, implying that $(\sum_x \hat{C}''_x)^2$ has the same spectrum with $S_C S_D$, where we have utilized the fact that $S_C^{1/2} S_D S_C^{1/2}$ has the same spectrum with $S_C S_D$.
				Let $U$ be the unitary operator that diagonalizes $\sum_x \hat{C}''_x$, i.e., $U\left(\sum_x\hat{C}''_x\right)U^{\dagger}$ is a diagonal matrix.
				Then $\{U\hat{C}''_xU^{\dagger}\}$ and $\{U\hat{D}''_yU^{\dagger}\}$
				is the desired canonical form of PSD factorization for $Q$.
		\end{enumerate}
		
	\end{proof}
	
   The above lemma implies that for $P$ in Eq.\eqref{correlation:maximal}, if all the PSD factorizations of size $d\times d$ are equivalent to a same canonical form of PSD factorization, then {$P$ must have a unique canonical forms of PSD factorization of size $d\times d$}. %the spectra of all the canonical forms of PSD factorization for $P$ of size $d\times d$ must be the same.
   Indeed, two canonical forms of PSD factorization with different spectra are not equivalent to each other.\\
	
\noindent \textbf{The proof for Theorem 1 }We first consider the lower right submatrix $\frac{1}{d(d+1)^2}I_d$ in $P$. It can be verified that this implies that
    for $x,y\in[d]=\{1,2,...,d\}$, $C_x$ and $D_y$ must be rank-1, i.e., $C_{d+x}={\mathrm{tr}(C_{d+x})}\ket{\alpha_x}\bra{\alpha_x}$ and $D_{d+y}={\mathrm{tr}(D_{d+y})}\ket{\beta_y}\bra{\beta_y}$ for certain pure states $\ket{\alpha_x}$ and $\ket{\beta_y}$. Furthermore, the vector set $\{\ket{\alpha_x}\}(x\in[d])$ is linearly independent, and similar for $\{\ket{\beta_y}\}(y\in[d])$. To see why this is the case, for each $x\in[d]$ we pick unit vectors $a_x\in\mathrm{range}(C_{d+x})$ and $b_{{x}} \in\mathrm{range}(D_{d+{x}})$ such that $\langle a_x,b_x\rangle\neq 0$;
	this is possible since $\mathrm{tr}(C_{d+x}D_{d+x})>0$, {where $\mathrm{range}(A)$ is the subspace consisting of all linear combinations of the eigenvectors of $A$ with nonzero eigenvalues}. Suppose the vectors $b_x$ are linearly dependent; that is,
    without loss of generality,
	\begin{align*}
		b_1=\sum_{x=2}^{d}{c_x} b_x,
	\end{align*}
	where {$c_x$} are complex scalars. Then
	\begin{align*}
		0\neq \langle a_1,b_1\rangle=\sum_{x=2}^{d}{c_x}\langle a_1,b_x\rangle=0,
	\end{align*}
	which is impossible.
	This gives
	\begin{align*}
		\mathrm{rank}(D_{d+2}+\cdots+D_{2d})\geq\mathrm{dim}(\mathrm{span}\{b_2,\cdots,b_d\})=d-1,
	\end{align*}
	{where $\mathrm{rank}(A)$ is the rank of the matrix $A$, $\mathrm{dim}(S)$ is the dimension of a vector space $S$, and $\mathrm{span}\{b_2,\cdots,b_d\}$ is the vector space consisting of all linear combinations of the vectors $b_2,\cdots,b_d$.} Since $\mathrm{tr}(C_{d+1}(D_{d+2}+\cdots+D_{2d}))=0$, we have $\mathrm{rank}(C_{d+1})\leq 1$,
	hence $\mathrm{rank}(C_{d+1})=1$ due to $C_{d+1}\neq 0$.
	By symmetry we have $\mathrm{rank}(C_{d+x})=1$ for all $x\in[d]$,
	and by swapping the roles of $C_{d+x}$ and $D_{d+y}$ we get $\mathrm{rank}(D_{d+y})=1$ for all $y\in[d]$.

    {Let $H=\left(\sum_{x=1}^{d}\sqrt{\sqrt{d}(d+1)\mathrm{tr}(C_{d+x})}\ket{\alpha_x}\bra{x-1}\right)^{-1}$}. Then $HC_{d+x}H^\dag=\frac{1}{\sqrt{d}(d+1)}\ket{x-1}\bra{x-1}$, where $x\in[d]$ and {$\{\ket{i}\}(i\in\{0,1,...,d-1\})$ are the computational basis states}. For any $x,y\in[2d]$, we let $C_x'=HC_xH^\dag$ and {$D_y'=(H^\dag)^{-1}D_yH^{-1}$}, then $\{C_x'\}$ and $\{D_y'\}$ {is a PSD factorization for $P$ that is equivalent to $\{C_x\}$ and $\{D_y\}$}. {By verifying the entries of $P$ we obtain that} $D_{d+y}'=\frac{1}{\sqrt{d}(d+1)}\ket{y-1}\bra{y-1}$ for $y\in[d]$.

    Similarly, for any $x,y\in[d]$ the matrices $C_x'$ and $D_y'$ must be of rank $1$ since they have to produce the submatrix $\frac{d^2}{d(d+1)^2}I_d$ of $P$. Additionally, for any $x,y\in[d]$ the diagonals of $C_x'$ and $D_y'$ are all $\frac{1}{\sqrt{d}(d+1)}$ (from the two submatrices $\frac{1}{d(d+1)^2}\vec{e}\vec{e}^T$ of $P$), hence {$\mathrm{tr}(C_x')=\mathrm{tr}(D_y')=\frac{\sqrt{d}}{d+1}$ and} all entries of $C_x'$ and $D_y'$ have norms at most $\frac{1}{\sqrt{d}(d+1)}$. For any $x\in[d]$, since $\frac{d}{(d+1)^2}=\mathrm{tr}(C_x'D_x')\leq \sqrt{\mathrm{tr}(C_x'^2)\mathrm{tr}(D_x'^2)}\leq\mathrm{tr}(C_x')\mathrm{tr}(D_x')=\frac{d}{(d+1)^2}$
	with the equality holding if and only if $C_x'$ is proportional to $D_x'$, we have $C_x'=D_x'$. Thus the constraint
    $\mathrm{tr}(C_x'D_y')=\frac{d}{(d+1)^2}\delta_{xy}$ {from the correlation} implies that $\sum_{x=1}^dC_x'=\sum_{y=1}^dD_y'=\frac{\sqrt{d}}{d+1}I_d$.

    In fact, for any $x,y\in[d]$ if setting $C_x'=D_x'=\frac{\sqrt{d}}{d+1}\ket{\gamma_x}\bra{\gamma_x}$ where $\ket{\gamma_x}$ is the $x$-th column of the $d\times d$ quantum Fourier transform matrix, and then combining the above choices of $\{C_{d+x}'\}$ and $\{D_{d+y}'\}$ together, one can verify that $\{C_x'\}$ and $\{D_y'\}$ do form a valid PSD factorization for $P$. Furthermore, we also have that $\sum_{x=1}^{2d}C_x'=\sum_{y=1}^{2d}D_y'=\frac{1}{\sqrt{d}}I_d$, implying that $\{C_x'\}$ and $\{D_y'\}$ form a canonical form of PSD factorization for $P$. The following lemma summarizes the above observations.
    \begin{lemm}
		\label{lemm:unique}
        The spectrum of all the canonical forms of PSD factorization for $P$ of size $d\times d$ is unique.
	\end{lemm}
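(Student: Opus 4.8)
The plan is to take an \emph{arbitrary} canonical form $\{C_k\},\{D_l\}$ of $P$ and show that it is carried onto the explicit Fourier factorization constructed above by a single global unitary, which, since the associated $\Lambda$ will turn out to be a scalar multiple of $I_d$, is the only symmetry a canonical form admits. I would organise the argument block by block, exactly as in the analysis preceding the lemma, but now extracting from each block of $P$ a rigid constraint on the two families of matrices and then showing these constraints leave essentially no freedom beyond that single unitary.

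The first step is to upgrade the symmetry observation $C_x=D_x$ (for $x\in[d]$, already obtained from the saturated Cauchy--Schwarz inequality on the upper-left block) to the second block as well. I would sum the off-diagonal constraint $\mathrm{tr}(C_{d+x}D_y)=\tfrac{1}{d(d+1)^2}$ over $y\in[d]$ and insert $\sum_{y\le d}D_y=\tfrac{\sqrt d}{d+1}I_d$; this pins $\mathrm{tr}(C_{d+x})=\tfrac{1}{\sqrt d(d+1)}$, hence $\mathrm{tr}(C_{d+x}^2)=\tfrac{1}{d(d+1)^2}$ because $C_{d+x}$ is rank one. The Cauchy--Schwarz bound on the lower-right diagonal entry then saturates and forces $C_{d+x}=D_{d+x}$. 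Thus in \emph{every} canonical form both families are symmetric, $C_k=D_k$ for all $k$.

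Writing $C_x=\tfrac{\sqrt d}{d+1}\ket{\gamma_x}\bra{\gamma_x}$ and $C_{d+x}=\tfrac{1}{\sqrt d(d+1)}\ket{\alpha_x}\bra{\alpha_x}$ for unit vectors $\ket{\gamma_x},\ket{\alpha_x}$, the two diagonal blocks $\propto I_d$ force each of $\{\ket{\gamma_x}\}$ and $\{\ket{\alpha_x}\}$ to be an orthonormal basis, while the all-ones off-diagonal blocks force $|\langle\gamma_x|\alpha_y\rangle|^2=\tfrac1d$ for all $x,y$, i.e.\ the two bases are mutually unbiased. A pleasant consequence is that the canonical condition is then automatic: $\sum_k C_k=\tfrac{\sqrt d}{d+1}I_d+\tfrac{1}{\sqrt d(d+1)}I_d=\tfrac{1}{\sqrt d}I_d$, so that $\Lambda=\tfrac{1}{\sqrt d}I_d$ is the \emph{same} for every canonical form. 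At this stage the only surviving datum is the mutually unbiased pair $(\{\ket{\gamma_x}\},\{\ket{\alpha_x}\})$, and the whole lemma reduces to showing that any such pair is equivalent, under a single label-respecting global unitary, to the computational-basis/Fourier pair of the explicit construction.

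The rigidity of this mutually unbiased pair is what I expect to be the crux. Using the global-unitary freedom (legitimate precisely because $\Lambda\propto I_d$ commutes with every unitary) I would first send $\{\ket{\alpha_x}\}$ to the computational basis, after which the residual freedom is only a diagonal unitary; it then remains to show that the flat orthonormal system $\{\ket{\gamma_x}\}$ is driven onto the Fourier columns up to row and column phases. This is the step most at risk and the one deserving the greatest care, since mutual unbiasedness by itself constrains $\{\ket{\gamma_x}\}$ only to be a complex-Hadamard basis, and closing the gap from "complex-Hadamard" to "Fourier up to phases" is exactly where the main difficulty lies; I would attack it by feeding the full set of block constraints into the problem simultaneously rather than entry by entry.
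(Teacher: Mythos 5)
There is a genuine gap, and it sits right at the start of your argument. Your first step imports, for an \emph{arbitrary} canonical form, the identities $C_x=D_x$ ($x\in[d]$) and $\sum_{y\le d}D_y=\tfrac{\sqrt d}{d+1}I_d$, calling them "already obtained." But the analysis preceding the lemma established these only in a \emph{conjugated} frame: they were derived after applying an invertible, a priori non-unitary matrix $H$ that maps the $C_{d+x}$ to $\tfrac{1}{\sqrt d(d+1)}\ket{x-1}\bra{x-1}$ (only then are the diagonals of the first-block matrices pinned, which is what makes the Cauchy--Schwarz saturation argument run). For an arbitrary canonical form, in its own frame, these facts are not available; in fact, knowing that the partial sums are proportional to $I_d$ in the original frame is essentially equivalent to knowing $\Lambda=\tfrac{1}{\sqrt d}I_d$, i.e., to the lemma itself, so your derivation is circular. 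The paper closes exactly this loop with a step your proposal never supplies: conjugate the arbitrary canonical form $\{C''_x\},\{D''_y\}$ by an invertible $M$ into the nice frame, and then use the defining property of a canonical form \emph{twice} --- $M\Lambda' M^\dag=\tfrac{1}{\sqrt d}I_d$ from the $C$-side sum and $(M^\dag)^{-1}\Lambda' M^{-1}=\tfrac{1}{\sqrt d}I_d$ from the $D$-side sum --- which together force $MM^\dag=I_d$, hence $M$ unitary and $\Lambda'=\tfrac{1}{\sqrt d}I_d$. That unitarity argument is the actual content of the proof.

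Your intended endpoint is also unattainable, not merely "at risk." Any orthonormal basis unbiased with respect to the computational basis (equivalently, any complex Hadamard matrix, suitably normalized) yields a valid canonical form with the same $\Lambda$; since for $d=4$ (and $d\ge 6$) there exist complex Hadamard matrices inequivalent to the Fourier matrix under row and column phases, and the residual freedom after fixing the second block is exactly a diagonal unitary, no label-respecting global unitary can carry such a form onto the Fourier one. So matrix-level uniqueness is simply false in general; what is unique --- and all the paper needs downstream, since the Schmidt coefficients of the purification are read off from it --- is $\Lambda=\tfrac{1}{\sqrt d}I_d$. Your outline actually reaches this weaker (and correct) statement midway, as the "pleasant consequence"; the fix is to aim the lemma at uniqueness of $\Lambda$, discard the Hadamard-rigidity step entirely, and ground the intermediate identities with the conjugation-plus-unitarity argument above rather than by assuming them in the arbitrary frame.
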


    Note that $\rho$ can be a mixed state, and if this is the case, Bob can purify $\rho$ to be a pure state $\ket{\phi}$ in $\h_A\otimes \h_B \otimes \h_{R}$ by introducing an ancillary system $R$. After that, Alice and Bob can generate $P$ by measuring {$\ket{\phi}$} with the POVMs $\{A_x\}$ and $\{B_y\otimes I_R\}$, where $I_R$ is the identity operator on the system $R$. According to Ref.[31], we know that with respect to the partition {$A|BR$}, the Schmidt coefficients of $\ket{\phi}$ is the diagonal entries of $\frac{1}{\sqrt{d}}I_d$, i.e., $\{\frac{1}{\sqrt{d}},\frac{1}{\sqrt{d}},...,\frac{1}{\sqrt{d}}\}$. Since the ancillary system $R$ can also be introduced by Alice, we obtain the following conclusion.
    \begin{prop}
		$\mathrm{tr}_A(\rho)=\mathrm{tr}_B(\rho)=\frac{1}{d}I_d$.
	\end{prop}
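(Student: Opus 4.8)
The plan is to read off both marginals of $\rho$ directly from the Schmidt spectrum of a suitable purification, leveraging the uniqueness of the canonical PSD factorization proved in Lemma~\ref{lemm:unique}. First I would take the purification $\ket{\phi}\in\h_A\otimes\h_B\otimes\h_R$ of $\rho$ with the ancilla $R$ held by Bob, and check that $\ket{\phi}$ together with the POVMs $\{A_x\}$ and $\{B_y\otimes I_R\}$ reproduces the same correlation $P$. This is immediate, since $\mathrm{tr}\big(\ket{\phi}\bra{\phi}(A_x\otimes B_y\otimes I_R)\big)=\mathrm{tr}\big(\mathrm{tr}_R(\ket{\phi}\bra{\phi})(A_x\otimes B_y)\big)=\mathrm{tr}(\rho(A_x\otimes B_y))=P_{xy}$, so $\ket{\phi}$ is a legitimate pure-state generator of $P$.

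Next I would invoke the relation recalled in the setting section (Ref.~\cite{chen2023generations}) that ties the Schmidt coefficients of such a purification, taken with respect to the partition $A\,|\,BR$, to the diagonal entries of $\Lambda$ in a canonical form of PSD factorization for $P$. Since Lemma~\ref{lemm:unique} forces every canonical form to satisfy $\Lambda=\frac{1}{\sqrt d}I_d$, the Schmidt coefficients of $\ket{\phi}$ must all equal $\frac{1}{\sqrt d}$, and in particular the Schmidt rank is the full value $d$. The key observation is then that the reduced state of $\ket{\phi}$ on Alice's system coincides with $\mathrm{tr}_B(\rho)$: because $\mathrm{tr}_R(\ket{\phi}\bra{\phi})=\rho$, we have $\mathrm{tr}_{BR}(\ket{\phi}\bra{\phi})=\mathrm{tr}_B(\rho)$, whose eigenvalues are the squares of the Schmidt coefficients, i.e.\ all equal to $\frac1d$. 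Since $\h_A$ is $d$-dimensional and the rank is $d$, this gives $\mathrm{tr}_B(\rho)=\frac1d I_d$.

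Finally, to obtain $\mathrm{tr}_A(\rho)=\frac1d I_d$ I would repeat the whole argument with the ancilla introduced on Alice's side, measuring the resulting purification with $\{A_x\otimes I_R\}$ and $\{B_y\}$; the same correlation $P$ is produced, so the symmetric version of the above reasoning forces the marginal on Bob's system, namely $\mathrm{tr}_A(\rho)$, to have all eigenvalues $\frac1d$. The step I expect to require the most care is the bookkeeping that matches the Schmidt spectrum of the purification to the eigenvalues of the \emph{correct} marginal of $\rho$ — verifying that tracing out both $B$ and the ancilla $R$ recovers $\mathrm{tr}_B(\rho)$, that the partition invoked in Ref.~\cite{chen2023generations} is exactly $A\,|\,BR$, and that the mirror-image construction with the ancilla on Alice's side genuinely falls under the same uniqueness conclusion (which it does, by the symmetry $P=P^{T}$ of the correlation in Eq.~\eqref{correleation:maximal}).
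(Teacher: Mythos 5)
Your proposal is correct and follows essentially the same route as the paper: purify $\rho$ with an ancilla on Bob's side, observe that the purification generates the same correlation $P$ under $\{A_x\}$ and $\{B_y\otimes I_R\}$, invoke the result of Ref.~\cite{chen2023generations} together with Lemma~\ref{lemm:unique} to pin the Schmidt coefficients with respect to $A\,|\,BR$ at $\frac{1}{\sqrt d}$, and then symmetrize by placing the ancilla on Alice's side. The only difference is that you spell out the bookkeeping (the trace identities and the $P=P^{T}$ symmetry) that the paper leaves implicit, which is fine.
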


	Due to the above lemma, if we can next show that the state $\rho$ is pure, then it must be a bipartite maximally entangled pure state. For this, we need to use the following fact. {Recall that Alice and Bob generate $P$ by measuring $\rho$ with two local quantum measurements with operators $\{A_x\}$ and $\{B_y\}$ respectively.}
	\begin{prop}
		\label{lemm:rankone}
		All measurement operators $A_x$ and $B_y$ have rank $1$, where $x,y\in[2d]$.
	\end{prop}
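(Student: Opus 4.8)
The plan is to produce, for each party, a positive semidefinite factorization of $P$ by $d\times d$ matrices in which the measurement operator of interest appears, up to a positive scalar, as one of the factors, and then to invoke the fact---already established in the discussion preceding Lemma~\ref{lemm:unique}---that in any such factorization every factor has rank one. The only subtlety is that $\rho$ may be mixed. First I would handle this by purifying $\rho$ on the side \emph{opposite} to the party under consideration, so that the operator being analyzed still acts on a genuinely $d$-dimensional space rather than being replaced by a compression.

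To bound the ranks of the $A_x$, let Bob purify $\rho$ to $\ket{\phi}\in\h_A\otimes\h_B\otimes\h_R$. Because $\mathrm{tr}_B(\rho)=\frac1d I_d$ by the preceding proposition, Alice's reduced state $\mathrm{tr}_{BR}(\ket{\phi}\bra{\phi})=\frac1d I_d$ has full rank, so $\ket{\phi}$ is maximally entangled across the cut $A|BR$, with Schmidt decomposition $\ket{\phi}=\frac{1}{\sqrt d}\sum_{i=1}^{d}\ket{i}_A\ket{\mu_i}_{BR}$ for some orthonormal $\{\ket{\mu_i}\}$. Writing $(\tilde B_y)_{ij}=\bra{\mu_i}(B_y\otimes I_R)\ket{\mu_j}$ for the compression of $B_y\otimes I_R$ onto this subspace, a direct computation yields
\begin{align*}
P_{xy}=\bra{\phi}\,A_x\otimes(B_y\otimes I_R)\,\ket{\phi}=\frac1d\sum_{i,j}(A_x)_{ij}(\tilde B_y)_{ij}=\mathrm{tr}(C_xD_y),
\end{align*}
with the $d\times d$ PSD matrices $C_x=\frac{1}{\sqrt d}A_x$ and $D_y=\frac{1}{\sqrt d}\,\overline{\tilde B_y}$ (the conjugate makes the bilinear form reproduce $P_{xy}$ via the Hermiticity of $\tilde B_y$). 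Since $\sum_xA_x=I_d$ and $\sum_y\tilde B_y=I_d$, one checks $\sum_xC_x=\sum_yD_y=\frac{1}{\sqrt d}I_d$, so $\{C_x\},\{D_y\}$ is in fact a canonical PSD factorization of $P$ of matrix size $d$.

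It then follows that each $C_x$ has rank one---directly from the earlier rank-one analysis, or, since the factorization is canonical, from the uniqueness in Lemma~\ref{lemm:unique} (any two canonical forms differ by a unitary conjugation, which preserves rank)---and hence $\mathrm{rank}(A_x)=\mathrm{rank}(\sqrt d\,C_x)=1$ for all $x\in[2d]$. Because $P=P^{T}$ is symmetric under interchanging Alice and Bob, running the same argument with the ancilla introduced on Alice's side instead---so that now $B_y$ itself, via the factor $D_y=\frac{1}{\sqrt d}\,\overline{B_y}$, appears in a $d\times d$ PSD factorization of $P$---gives $\mathrm{rank}(B_y)=1$ for all $y\in[2d]$. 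The main obstacle I anticipate is precisely this purification bookkeeping: one must add the ancilla on the opposite side for each party, since if it were added on the same side one would only control the compression $\tilde B_y$, and rank one for a compression does not bound the rank of $B_y$ itself.
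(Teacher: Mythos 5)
Your proof is correct, and its core mechanism is the same as the paper's: realize each measurement operator, up to a positive scalar, as a factor in a $d\times d$ PSD factorization of $P$, then invoke the rank-one analysis established before Lemma~\ref{lemm:unique}. The difference is only in how the partner factors are built, and here the paper is more economical: it defines $\tilde{B}_y=\mathrm{tr}_{B}\bigl((I_A\otimes \sqrt{B_y})\,\rho\,(I_A\otimes \sqrt{B_y})\bigr)$, so that $P_{xy}=\mathrm{tr}(A_x\tilde{B}_y)$ is already the desired $d\times d$ factorization with $A_x$ itself untouched --- no purification, no Schmidt decomposition, and no appeal to the preceding proposition $\mathrm{tr}_A(\rho)=\mathrm{tr}_B(\rho)=\frac{1}{d}I_d$ (so that proof would survive even without full-rank marginals). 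The compression worry you flag, which forces your opposite-side-ancilla bookkeeping, is exactly what this partial-trace definition sidesteps, since $\tilde{B}_y$ lives on $\mathcal{H}_A$ by construction. In fact the two constructions agree up to scalars: writing everything in your Schmidt basis, the paper's $\tilde{B}_y$ equals $\frac{1}{d}$ times the complex conjugate of your compression, i.e., $\frac{1}{\sqrt{d}}$ times your $D_y$, so your factorization is the paper's with the weight $\frac{1}{d}$ split evenly between the two sides (which is what additionally makes yours canonical, a feature the rank argument does not actually need).
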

	\begin{proof}
		Let $\tilde{B}_y=\mathrm{tr}_{B}((I_A\otimes B_y)\rho)$.
		Then $P_{xy}=\mathrm{tr}(A_x\tilde{B}_y)=\mathrm{tr}(\rho(A_x\otimes B_y))$, {hence $\{A_x\}$ and $\{\tilde{B}_y\}$ is a PSD decomposition for $P$.}
		Note that $\tilde{B}_y$ is a PSD matrix for any $y\in[2d]$, thus we can apply the technique we have utilized {in Theorem~\ref{thm:maximal}} to prove that $\mathrm{rank}(A_x)=1$ for any $x\in[2d]$. By swapping the roles of $A_x$ and $B_y$ we get $\mathrm{rank}(B_y)=1$ for all $y\in[2d]$.
	\end{proof}

    Then $A_x$ and $B_y$ can be further characterized as below.
	\begin{prop}
		For $i,j\in[d]$ with $i\neq j$, we have
		\begin{align}
			A_iA_j=B_iB_j=A_{d+i}A_{d+j}=B_{d+i}B_{d+j}=0.
		\end{align}
	\end{prop}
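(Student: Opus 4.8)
The plan is to reduce the four claimed operator identities to orthogonality relations between rank-$1$ projectors. By Proposition~\ref{lemm:rankone} every $A_x$ and $B_y$ has rank $1$, so I write $A_x=\ket{\phi_x}\bra{\phi_x}$ and $B_y=\ket{\psi_y}\bra{\psi_y}$ for (unnormalized) vectors. For rank-$1$ operators $A_iA_j=\braket{\phi_i|\phi_j}\,\ket{\phi_i}\bra{\phi_j}$, so $A_iA_j=0$ is equivalent to $\braket{\phi_i|\phi_j}=0$; it therefore suffices to show that $\{\ket{\phi_x}\}_{x\in[d]}$ and $\{\ket{\phi_{d+x}}\}_{x\in[d]}$ are each orthogonal families, and likewise for the $\psi$'s.

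First I would build a canonical PSD factorization directly out of the measurement operators. Reusing $\tilde{B}_y=\mathrm{tr}_{B}((I_A\otimes\sqrt{B_y})\rho(I_A\otimes\sqrt{B_y}))$ from the proof of Proposition~\ref{lemm:rankone}, we have $P_{xy}=\mathrm{tr}(A_x\tilde{B}_y)$, so $\{A_x\}$ and $\{\tilde{B}_y\}$ form a PSD factorization of $P$. Its row sum is $\sum_x A_x=I_d$, and, because $\mathrm{tr}_B(\rho)=\frac{1}{d}I_d$, its column sum is $\sum_y\tilde{B}_y=\frac{1}{d}I_d$. Rescaling to $\{\frac{1}{\sqrt{d}}A_x\}$ and $\{\sqrt{d}\,\tilde{B}_y\}$ leaves every trace unchanged, $\mathrm{tr}(\tfrac{1}{\sqrt{d}}A_x\cdot\sqrt{d}\,\tilde{B}_y)=P_{xy}$, while both of the new sums equal the diagonal matrix $\frac{1}{\sqrt{d}}I_d$. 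Hence this rescaled pair is already a canonical form of PSD factorization for $P$.

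Now I would invoke Lemma~\ref{lemm:unique}. Since $\Lambda=\frac{1}{\sqrt{d}}I_d$ is a scalar matrix, the only residual freedom among canonical forms is conjugation by a unitary, and the explicit quantum-Fourier-transform form $\{C_x'\},\{D_y'\}$ constructed just above the lemma is one representative. There is thus a unitary $U$ with $\frac{1}{\sqrt{d}}A_x=UC_x'U^\dag$, i.e. $A_x=\sqrt{d}\,UC_x'U^\dag$. Reading off the explicit entries gives $A_x=\frac{d}{d+1}\,U\ket{\gamma_x}\bra{\gamma_x}U^\dag$ for $x\in[d]$ and $A_{d+x}=\frac{1}{d+1}\,U\ket{x-1}\bra{x-1}U^\dag$ for $x\in[d]$. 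As the Fourier columns $\{\ket{\gamma_x}\}$ and the computational basis $\{\ket{x-1}\}$ are each orthonormal and $U$ is unitary, the families $\{U\ket{\gamma_x}\}$ and $\{U\ket{x-1}\}$ remain orthonormal, giving $A_iA_j=A_{d+i}A_{d+j}=0$ for $i\neq j$. Running the symmetric construction with $\tilde{A}_x=\mathrm{tr}_A((\sqrt{A_x}\otimes I_B)\rho(\sqrt{A_x}\otimes I_B))$, so that $P_{xy}=\mathrm{tr}(\tilde{A}_xB_y)$ with $\sum_x\tilde{A}_x=\frac{1}{d}I_d$, yields the matching identities $B_iB_j=B_{d+i}B_{d+j}=0$.

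The step I expect to be delicate is the passage from the generic analysis of PSD factorizations to an orthogonality statement about the \emph{actual} operators. That earlier analysis only normalizes a factorization by an invertible matrix $H$ which is not unitary, and such an $H$ does not preserve inner products, so one cannot read orthogonality of the $\ket{\phi_x}$ off $HC_xH^\dag$. The whole point of routing through $\{\frac{1}{\sqrt{d}}A_x\}$ is that its sum is already the diagonal $\frac{1}{\sqrt{d}}I_d$, which forces the map to the Fourier representative to be unitary by Lemma~\ref{lemm:unique}; only then is orthogonality transported faithfully back to the $A_x$ themselves.
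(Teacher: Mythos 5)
Your proposal is correct in its essential structure, but it takes a genuinely different route from the paper. The paper argues directly on the state, one pair of operators at a time: it first computes $\mathrm{tr}(A_1)=\frac{d}{d+1}$ from the row sums of $P$ and the maximally mixed marginal, gauge-fixes $A_1=B_1=\frac{d}{d+1}\ket{0}\bra{0}$ by local unitaries, expands $A_2=\frac{d}{d+1}(\alpha\ket{0}+\beta\ket{\bar{1}})(\alpha^*\bra{0}+\beta^*\bra{\bar{1}})$, and then uses $\braket{00|\rho|00}=\frac{1}{d}$, $\mathrm{tr}_A(\rho)=\frac{1}{d}I_d$ and positivity of $\rho$ to force $\braket{\bar{1}0|\rho|\bar{1}0}=0$, so that $0=P_{21}\propto|\alpha|^2$ kills the overlap. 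Your proof instead promotes the measurement operators themselves into a canonical form of PSD factorization: the pair $\{\frac{1}{\sqrt{d}}A_x\}$, $\{\sqrt{d}\,\tilde{B}_y\}$ both sum to the diagonal matrix $\frac{1}{\sqrt{d}}I_d$ (this is where the maximally-mixed-marginal proposition enters), and then the canonical-form machinery does the work: the conjugation relating two canonical forms must be unitary (this is exactly the $MM^\dag=I_d$ step in the proof of Lemma~\ref{lemm:unique}), so orthogonality of the structured representative is transported faithfully back to the actual $A_x$'s. You also correctly identified the crux, namely that the invertible $H$ in the generic analysis is not unitary and hence does not preserve inner products, which is precisely why routing through a factorization whose sum is already diagonal is needed. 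Your approach buys something the paper's does not: it characterizes all $2d$ operators of each party at once (and recovers $\mathrm{tr}(A_i)=\frac{d}{d+1}$ for free, which the paper derives separately), whereas the paper's argument is more elementary and self-contained, needing only $2\times 2$ reasoning inside a fixed basis.

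One imprecision should be flagged. You claim that, since $\Lambda=\frac{1}{\sqrt{d}}I_d$ is scalar, ``the only residual freedom among canonical forms is conjugation by a unitary,'' and hence $\frac{1}{\sqrt{d}}A_x=UC_x'U^\dag$ with $\{C_x'\}$ the explicit quantum-Fourier-transform representative. That is stronger than what Lemma~\ref{lemm:unique} (or the generic analysis behind it) establishes: after the unitary conjugation, the second-half operators are pinned to computational-basis projectors, but the first-half operators are pinned only to scaled projectors onto \emph{some} orthonormal basis unbiased with respect to the computational one, i.e., the columns of some complex Hadamard matrix. For general $d$ there exist Hadamard matrices inequivalent to the Fourier matrix, so not every canonical form is unitarily conjugate to the Fourier one. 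This does not damage your argument, because all you use is that, after a \emph{unitary} conjugation, the families $\{A_x\}_{x\in[d]}$ and $\{A_{d+x}\}_{x\in[d]}$ become scaled projectors onto two orthonormal bases; that weaker statement is exactly what the analysis provides, and it already yields $A_iA_j=A_{d+i}A_{d+j}=0$ for $i\neq j$. Simply replace ``the Fourier representative'' by ``a representative of the same structured form'' and your proof is sound.
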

	\begin{proof}
        {
		We now show that $A_i A_j=0$ for $i,j\in[d]$ with $i\neq j$, and the other cases can be proved similarly.
        Recall that $P_{xy}=\mathrm{tr}(\rho(A_x\otimes B_y))$, then it holds that
        \begin{equation*}
        \sum_{y=1}^{2d}P_{iy}=\mathrm{tr}(\rho(A_i\otimes I_d))=\mathrm{tr}(\mathrm{tr}_B(\rho)\cdot A_i)=\frac{\mathrm{tr}(A_i)}{d}.
        \end{equation*}
        Combined with $\sum_{y=1}^{2d}P_{iy}=\frac{1}{d+1}$, this means that $\mathrm{tr}(A_i)=\frac{d}{d+1}$.
        Similarly $\mathrm{tr}(A_j)=\frac{d}{d+1}$.
        }

		{
		Let $\tilde{B}_y=\mathrm{tr}_{B}((I_A\otimes B_y)\rho))$ for $y\in[2d]$.
		We have
		\begin{align*}
			\mathrm{tr}(\tilde{B}_j)= & \sum_{x=1}^{2d}\mathrm{tr}(A_x \tilde{B}_j) \\
			= & \sum_{x=1}^{2d}\mathrm{tr}(A_x\mathrm{tr}_B((I_A\otimes B_j)\rho)) \\
			= & \sum_{x=1}^{2d}\mathrm{tr}((A_x\otimes B_j)\rho) \\
			= & \sum_{x=1}^{2d}P_{xj} \\
			= & \frac{1}{d+1}.
		\end{align*}
		Then
		\begin{align*}
			\frac{d}{(d+1)^2}=P_{jj}=\mathrm{tr}(A_j\tilde{B}_j)\leq \mathrm{tr}(A_j)\mathrm{tr}(\tilde{B}_j)=\frac{d}{(d+1)^2},
		\end{align*}
		which is possible only when $\tilde{B}_j$ is a scalar multiple of $A_j$.
		By $P_{ij}=\mathrm{tr}(A_i\tilde{B_j})=0$ we have $A_i\tilde{B_j}=0$, hence $A_i A_j=0$.
		}
	\end{proof}

	Because of the above fact, we may perform two local unitaries on each party so that
	for $i\in[d]$ it holds that $A_i$ and $B_i$ are both proportional to {$\ket{i-1}\bra{i-1}$},
	where $\rho$ is also adjusted accordingly.
	{
	Together with $\mathrm{tr}(A_i)=\mathrm{tr}(B_i)=\frac{d}{d+1}$ for $i\in[d]$, we can let {$A_i=\frac{d}{d+1}\ket{i-1}\bra{i-1}$ and $B_i=\frac{d}{d+1}\ket{i-1}\bra{i-1}$} for $i\in[d]$.
	}

	Lastly, since $P_{ij}=\mathrm{tr}(\rho(A_i\otimes B_j))=0$ for any $i\neq j\in[d]$, we have that $\mathrm{range}(\rho)\subseteq\mathrm{span}\{\ket{00},\ket{11},\cdots,\ket{(d-1)(d-1)}\}$. Similarly, it holds that $\braket{ii|\rho|ii}=\frac{1}{d}$, {which also means $|\braket{ii|\rho|jj}|\leq \frac{1}{d}$ since $\rho$ is a PSD matrix, here $i,j\in\{0,\cdots,d-1\}$.}

	\begin{prop}
		$|\braket{i|A_{d+1}|j}|=|\braket{i|B_{d+1}|j}|=\frac{1}{d(d+1)}$ for all {$i,j\in\{0,\cdots,d-1\}$}.
	\end{prop}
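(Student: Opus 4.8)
The plan is to leverage the rigid structure already extracted in the preceding paragraphs — namely that $A_x=B_x=\frac{d}{d+1}\ket{x-1}\bra{x-1}$ for $x\in[d]$, that $\mathrm{range}(\rho)\subseteq\mathrm{span}\{\ket{00},\dots,\ket{(d-1)(d-1)}\}$ with diagonal values $\braket{kk|\rho|kk}=\frac{1}{d}$, and that every POVM operator is rank one. First I would read off the \emph{diagonal} entries of $B_{d+1}$ from the off-diagonal block $\frac{1}{d(d+1)^2}\vec{e}\vec{e}^T$ of $P$. For $x\in[d]$ one has $P_{x,d+1}=\mathrm{tr}(\rho(A_x\otimes B_{d+1}))=\frac{d}{d+1}\,\mathrm{tr}(\rho(\ket{x-1}\bra{x-1}\otimes B_{d+1}))$. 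Writing $\rho=\sum_{i,j}\rho_{ij}\ket{ii}\bra{jj}$ and using that the projector $\ket{x-1}\bra{x-1}$ on Alice's side annihilates every term except $i=j=x-1$, only the surviving coefficient $\rho_{(x-1)(x-1)}=\frac{1}{d}$ contributes, so $P_{x,d+1}=\frac{1}{d+1}\braket{x-1|B_{d+1}|x-1}$. Matching this with the known value $\frac{1}{d(d+1)^2}$ forces $\braket{k|B_{d+1}|k}=\frac{1}{d(d+1)}$ for every $k\in\{0,\dots,d-1\}$, i.e.\ all diagonal entries of $B_{d+1}$ are equal.

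The decisive second step is to promote this diagonal information to all matrix entries via rank-one-ness. By Proposition~\ref{lemm:rankone}, $B_{d+1}=b\ket{\beta}\bra{\beta}$ for a unit vector $\ket{\beta}$ and some $b>0$. The equal-diagonal condition reads $b|\braket{k|\beta}|^2=\frac{1}{d(d+1)}$ independently of $k$, so all amplitudes $|\braket{k|\beta}|$ coincide. Consequently $|\braket{i|B_{d+1}|j}|=b\,|\braket{i|\beta}|\,|\braket{\beta|j}|=\frac{1}{d(d+1)}$ for every $i,j$. Running the identical argument on the lower-left block, with the constraint $P_{d+1,y}=\frac{1}{d(d+1)^2}$ for $y\in[d]$ and the roles of Alice and Bob interchanged, yields $|\braket{i|A_{d+1}|j}|=\frac{1}{d(d+1)}$ as well, completing the proof. (The same reasoning in fact applies verbatim to each $A_{d+x}$ and $B_{d+y}$, since every entry of the off-diagonal blocks equals $\frac{1}{d(d+1)^2}$.)

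The main obstacle worth flagging is exactly the gap bridged in the second step. The correlation $P$ only ever constrains the \emph{diagonal} of $B_{d+1}$ and $A_{d+1}$: the projective form of $\{A_x\}_{x\in[d]}$ together with the diagonal support of $\rho$ collapses every relevant expectation into a single diagonal read-out, so the off-diagonal magnitudes are never touched by the block equations themselves. A generic full-rank PSD matrix with constant diagonal could carry essentially arbitrary off-diagonal entries, and the statement would simply be false without additional input. The rank-one property established earlier is precisely what rigidifies the off-diagonal amplitudes, and recognizing it as the load-bearing ingredient — rather than attempting to squeeze more information out of $P$ directly — is the conceptual crux.
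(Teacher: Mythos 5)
Your proposal is correct and follows essentially the same route as the paper: both read off the diagonal entries of $B_{d+1}$ from the block $\frac{1}{d(d+1)^2}\vec{e}\vec{e}^T$ using $A_x=\frac{d}{d+1}\ket{x-1}\bra{x-1}$ together with the diagonal support of $\rho$ and $\braket{kk|\rho|kk}=\frac{1}{d}$, and then invoke the rank-one property to force all off-diagonal entries to have the same norm. Your write-up even makes explicit the factorization $|\braket{i|B_{d+1}|j}|=b\,|\braket{i|\beta}|\,|\braket{\beta|j}|$ that the paper leaves implicit in its final sentence.
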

	\begin{proof}
		For any {$i\in\{0,\cdots,d-1\}$} we have
		\begin{align*}
			\frac{1}{d(d+1)^2}= & \mathrm{tr}(\rho({A_{i+1}}\otimes B_{d+1})) \\
			= & \frac{d}{d+1}\mathrm{tr}(\rho(\ket{i}\bra{i}\otimes B_{d+1})) \\
			= & \frac{d}{d+1}\sum_{jkj'k'}\braket{jk|\rho|j'k'}\bra{j'k'}(\ket{i}\bra{i}\otimes B_{d+1})\ket{jk} \\
			= & \frac{d}{d+1}\sum_{kk'}\braket{ik|\rho|ik'}\braket{k'|B_{d+1}|k} \\
			= & \frac{d}{d+1}\cdot\frac{1}{d}\braket{i|B_{d+1}|i} \\
			= & \frac{1}{d+1}\cdot\braket{i|B_{d+1}|i},
		\end{align*}
		hence $\braket{i|B_{d+1}|i}=\frac{1}{d(d+1)}$.
		Since $\mathrm{rank}(B_{d+1})=1$, the rest of the entries also have norms $\frac{1}{d(d+1)}$.
	\end{proof}

	We now examine the $(d+1,d+1)$-th entry of $P$ using the Cauchy-Schwarz inequality:
	\begin{align*}
		\frac{1}{d(d+1)^2}= & \mathrm{tr}(\rho(A_{d+1}\otimes B_{d+1})) \\
		= & \sum_{iji'j'}\braket{ij|\rho|i'j'}\braket{i'j'|A_{d+1}\otimes B_{d+1}|ij} \\
		= & \sum_{ii'}\braket{ii|\rho|i'i'}\braket{i'|A_{d+1}|i}\braket{i'|B_{d+1}|i} \\
		\leq & \left(\sum_{ii'}|\braket{ii|\rho|i'i'}|^2\right)^{1/2}\left(\sum_{ii'}|\braket{i'|A_{d+1}|i}\braket{i'|B_{d+1}|i}|^2\right)^{1/2} \\
		\leq & \left(d^2\cdot d^{-2}\right)^{1/2}\left(d^2\cdot (d(d+1))^{-4}\right)^{1/2} \\
		= & \frac{1}{d(d+1)^2}.
	\end{align*}
	Therefore, there exists a scalar $\alpha$ such that
	\begin{align}
		\braket{ii|\rho|i'i'}=\alpha\braket{i'|A_{d+1}|i}\braket{i'|B_{d+1}|i}.
	\end{align}
	{Recall that $A_{d+1}$ and $B_{d+1}$ have rank 1, and $\mathrm{range}(\rho)\subseteq\mathrm{span}\{\ket{00},\ket{11},\cdots,\ket{(d-1)(d-1)}\}$, then the above relation} implies that $\mathrm{rank}(\rho)=1$, hence $\rho$ is a pure state. By $\mathrm{tr}_A(\rho)=\mathrm{tr}_B(\rho)=\frac{I_d}{d}$, we conclude that $\rho$ is the maximally entangled state.

\section{Certifying general bipartite entangled pure states with $2d\times 2d$ correlations}

    Based on the result in the previous section, we now develop a rescaling technique to prove that general bipartite entangled pure states can be certified by $2d\times 2d$ correlations {in a semi-device-independent manner}.
	\begin{theo}
		\label{correlation:general2d}
		Suppose $\lambda_i>0$ for $i\in[d]$.
		For any real number $\alpha_0\in(0,\frac{d+1}{\sqrt{d}}\min_{i}\sqrt{\lambda_i})$,
		let
		\begin{align}
			\alpha_i= & (d+1)\sqrt{d}\sqrt{\lambda_i}-\alpha_0 d
		\end{align}
		for $i\in[d]$.
		Let
		\begin{align*}
			\vec{\alpha}= & (\alpha_0,\cdots,\alpha_0,\alpha_1,\cdots,\alpha_d).
		\end{align*}
		If the correlation
		\begin{align}
			\tilde{P}= & \mathrm{diag}(\vec{\alpha})P\mathrm{diag}(\vec{\alpha})
		\end{align}
		is obtained upon locally measuring a $d\times d$ state $\rho$, where $P$ is the correlation given in Eq.\eqref{correlation:maximal}, then $\rho$ is bipartite pure entangled state
		with Schmidt coefficients $\sqrt{\lambda_1},\cdots,\sqrt{\lambda_d}$.
	\end{theo}

	\begin{proof}
		By straightforward verification, one can show that $\tilde{P}$ is a valid correlation of size $2d\times 2d$, {i.e., all the entries of $\tilde{P}$ are nonnegative and they sum to 1.}
        Notice that $\tilde{P}$ has the same PSD rank of $d$ as $P$.
		In fact, let $\{C_i\}$, $\{D_j\}$ be {an arbitrary} $d\times d$ PSD factorization for $\tilde{P}$.
		Then $\{\vec{\alpha}_i^{-1}C_i\}$ and $\{\vec{\alpha}_j^{-1}D_j\}$ form a $d\times d$ PSD factorization for $P$,
		where $\vec{\alpha}_i$($i\in[2d]$) denotes the $i$-th component of $\vec{\alpha}$.
		{According to our discussions in the previous section, this} means that the PSD factorization $\{\vec{\alpha}_i^{-1}C_i\}$, $\{\vec{\alpha}_j^{-1}D_j\}$
		is equivalent to a PSD factorization $\{C'_i\}$, $\{D'_j\}$ of $P$
		such that
		\begin{align*}
			C'_i= & D'_i, & \forall i\in[2d], \\
			\sum_{i=1}^{d}C'_i= & \frac{\sqrt{d}}{d+1}I_d, \\
			C'_{d+i}= & \frac{1}{\sqrt{d}(d+1)}\ket{i-1}\bra{i-1}, & \forall i\in[d].
		\end{align*}
		The same equivalence transform sends $\{C_i\}$, $\{D_j\}$
		to $\{\vec{\alpha}_i C'_i\}$, $\{\vec{\alpha}_j D'_j\}$.
		{It can be verified that $\{\vec{\alpha}_i C'_i\}$, $\{\vec{\alpha}_j D'_j\}$ is actually a canonical form of PSD factorization for $\tilde{P}$ with the same spectrum as} %hence the spectrum of the canonical form of PSD factorization for $\tilde{P}$ that is equivalent to $\{C_i\}$ and $\{D_j\}$ is the same as that of
		\begin{align}
			{\left(\sum_{i=1}^{2d}\vec{\alpha}_i C'_i\right)\left(\sum_{i=1}^{2d}\vec{\alpha}_j D'_j\right)=}\left(\alpha_0\frac{\sqrt{d}}{d+1}I_d+\frac{1}{(d+1)\sqrt{d}}\sum_{i=1}^{d}\alpha_i\ket{i-1}\bra{i-1}\right)^2,
		\end{align}
		{and the spectrum} by the constructions of $\alpha_0$ and $\alpha_i$($i\in[d]$) is exactly $\lambda_1,\cdots,\lambda_d$.
		Therefore, any bipartite $d\times d$ state $\rho$ that generates $\tilde{P}$
		must have its reduced states $\mathrm{tr}_A(\rho)$ and $\mathrm{tr}_B(\rho)$
		having the spectrum $\lambda_1,\cdots,\lambda_d$.

		Now let $A_i$ and $B_j$ be the local measurement operators that generate $\tilde{P}$ upon measuring $\rho$;
		that is,
		\begin{align*}
			\mathrm{tr}((A_i\otimes B_j)\rho)=\tilde{P}_{ij}.
		\end{align*}
		Then
		\begin{align*}
			\mathrm{tr}((\vec{\alpha}_i^{-1} A_i\otimes \vec{\alpha}_j^{-1} B_j)\rho)={P_{ij}}.
		\end{align*}
		Let $S_A=\sum_{i=1}^{2d}\vec{\alpha}_i^{-1}A_i$ and $S_B=\sum_{j=1}^{2d}\vec{\alpha}_j^{-1}B_j$. {Since $\vec{\alpha}_i>0$ for any $1\leq i\leq 2d$, it can be seen that $S_A$ and $S_B$ are invertible.}
		Then $\{S_A^{-1/2}(\vec{\alpha}_i^{-1} A_i)S_A^{-1/2}\}$ and $\{S_B^{-1/2}\vec{\alpha}_j^{-1} B_j S_B^{-1/2}\}$
		are measurements as well;
		furthermore, they generate $P$ upon measuring the state $(S_A^{1/2}\otimes S_B^{1/2})\rho(S_A^{1/2}\otimes S_B^{1/2})$.
		{According to our discussions in the previous section, this} implies that the state $(S_A^{1/2}\otimes S_B^{1/2})\rho(S_A^{1/2}\otimes S_B^{1/2})$ is pure,
		hence $\rho$ is also pure.
		By the spectrum of the reduce states of $\rho$, we conclude that $\rho$ is a bipartite pure state with Schmidt coefficients $\sqrt{\lambda_1},\cdots,\sqrt{\lambda_d}$.
	\end{proof}

{\section{The robustness of our protocols}}

In this section, we will prove that the above protocols are robust to quantum noises and experimental imperfections. We assume that $d\geq 2$ throughout this section. {We first introduce the outline of our proof.}

\subsection{Overview of the Proof}
{
Let $(\rho,\{A_i\},\{B_j\})$ be a configuration in a $d\times d$ bipartite system that generates a $2d\times 2d$ correlation $P'$ (given by $P'_{ij}=\mathrm{tr}(\rho(A_i\otimes B_j))$) that is ``close'' to the target correlation $P$ in Eq.\eqref{correlation:maximal}. The proof proceeds as follows:
\begin{itemize}
	\item
		Lemma \ref{lemm:robust-psdrank} shows that the PSD rank of $P'$ remains to be $d$ in certain proximity of $P$.
		This ensures the invertibility of some operators in the following lemmas.
	\item
		The configuration $(\rho,\{A_i\},\{B_j\})$ induces two PSD factorizations:
		\begin{align}
			P'_{ij}=\mathrm{tr}(A_i\cdot\mathrm{tr}_B(\rho(I_A\otimes B_j)))=\mathrm{tr}(\mathrm{tr}_A(\rho(A_i\otimes I_B))\cdot B_j).
		\end{align}
		Since $\{A_i\}$ and $\{B_j\}$ are quantum measurements, both of the above two PSD factorizations have the property that there is one side on which the operators sum to identity. Lemma \ref{lemm:robust-rankone} and Lemma \ref{lemm:robust-rectify} slightly adjust such PSD factorizations on this special side so that the operators in the new PSD factorizations satisfy some additional properties, and furthermore the new correlation is still close to $P$.
	\item
		Lemma \ref{lemm:robust-rectify-config} combines the two paths of PSD factorization adjustment (one for each party) into a single adjustment directly to the original configuration $(\rho,\{A_i\},\{B_j\})$.
		The measurement operators $\tilde{A}_i$ and $\tilde{B}_j$ in the resulting configuration $(\rho,\{\tilde{A}_i\},\{\tilde{B}_j\})$
		are close to the original measurement operators $A_i$ and $B_j$ respectively and have several characterizations
		that enable the state $\rho$ to be determined.
	\item
		Finally, Lemma \ref{lemm:robust-sdp} formulates the robust state certification problem as a semidefinite programming (SDP) problem.
		This is possible since some of the measurement operators have the characterizations given by the previous step.
		The dual problem of this SDP has an explicitly constructed feasible point,
		whose dual objective function value is the desired robustness result.
\end{itemize}
}

{\subsection{Proof for the robustness in certifying maximally entangled states}}

Recall that
\begin{align*}
	P= &
	\frac{1}{d(d+1)^2}\begin{pmatrix}
		d^2 I_d & \vec{e}\vec{e}^T \\
		\vec{e}\vec{e}^T & I_d \\
	\end{pmatrix}.
\end{align*}
{Let $D(P',P)$ be the distance between the correlations $P$ and $P'$, which is defined as $D(P',P)=\max_{i,j}|P'_{ij}-P_{ij}|$. The following conclusion will be useful to us.}
%Unless otherwise specified, correlations are \textit{not} required to be normalized in this section;
%however, as a minor remark, since we are mainly concerned with correlations $P'$ that are close to $P$ in the sense of $d(P',P)$,
%all correlations appearing are likely to have all their entries summed close to one.

\begin{lemm}
	\label{lemm:robust-psdrank}
	Let $P'$ be a $2d\times 2d$ correlation with $D(P',P)\leq\varepsilon$ and PSD rank at most $d$.
	If $\varepsilon<\frac{1}{d^3(d+1)^2}$, then the PSD rank of $P'$ is $d$.
\end{lemm}
\begin{proof}
	It suffices to examine the {lower right} $d\times d$ block of $P'$, and we suppose this block has PSD rank {$r<d$}.
	Let $\{C_i\}$, $\{D_j\}$ be a PSD factorization of this block, where $\sum_{j=d+1}^{2d}D_j=I_r$. {This is always possible, as otherwise we can adjust the operators with an invertible matrix $H$ like in Lemma \ref{correlation:scsd}.}
	Then for any $i=d+1,\cdots,2d$, it holds that
	{
	\begin{align*}
		\mathrm{tr}(C_i)=\mathrm{tr}(C_i I_r)=\sum_{j=d+1}^{2d}\mathrm{tr}(C_i D_j)=\sum_{j=d+1}^{2d}P'_{ij} \leq \frac{1}{d(d+1)^2}+d \varepsilon,
	\end{align*}
	}
    which gives that $\mathrm{tr}(C_i^2)\leq \left(\frac{1}{d(d+1)^2}+d\varepsilon\right)^2$.
	On the other hand, there exists some $j\in[2d]-[d]$ such that $\mathrm{tr}(D_j)\leq r/d$ since
	$D_j\geq 0$ and $\sum_{j=d+1}^{2d}\mathrm{tr}(D_j)=\mathrm{tr}(I_r)=r$.
	For this same $j$ we have that $\mathrm{tr}(D_j^2)\leq (r/d)^2$.
	By the Cauchy-Schwarz inequality we obtain that
	\begin{align*}
		\left(\frac{1}{d(d+1)^2}-\varepsilon\right)^2\leq (P'_{jj})^{{2}}=\mathrm{tr}(C_j D_j)^2\leq {\mathrm{tr}(C_j^2)\mathrm{tr}(D_j^2)},
	\end{align*}
	which implies
	\begin{align}
		\mathrm{tr}(C_j^2)\geq \left(\frac{d}{r}\right)^2\left(\frac{1}{d(d+1)^2}-\varepsilon\right)^2.
	\end{align}
	If $r\leq d-1$ and $\varepsilon<\frac{1}{d^3(d+1)^2}$, we have that
	\begin{align}
		\left(\frac{d}{r}\right)^2\left(\frac{1}{d(d+1)^2}-\varepsilon\right)^2>\left(\frac{1}{d(d+1)^2}+d\varepsilon\right)^2,
	\end{align}
	which is a contradiction.
	Therefore, the PSD rank of $P'$ is $d$.
\end{proof}

%In light of Lemma \ref{lemm:robust-psdrank}, henceforth we only consider PSD factorizations in $d$-dimensional space.
%{Notice that the proof of Lemma \ref{lemm:robust-psdrank} also implies that when the condition for $\varepsilon$ is satisfied, both the upper left and the lower right $d\times d$ blocks of $P'$ have PSD rank $d$.}
{The proof of Lemma \ref{lemm:robust-psdrank} also implies that when the condition for $\varepsilon$ is satisfied,
both the upper left and the lower right $d\times d$ blocks of $P'$ have PSD rank $d$ (the analysis needs to be repeated with slight modification for the upper left block of $P'$).}

% \red{(Unused yet, might be useful in future development)}
% \begin{lemm}
% 	Let $X$ be a PSD matrix with spectral decomposition $X=\sum_{k=1}^{d} \mu_k v_k v_k^*$($\mu_1\geq \cdots \geq \mu_d\geq 0$).
% 	Then
% 	\begin{align*}
% 		\mu_1\geq \mathrm{tr}(X)-\frac{d}{2\mathrm{tr}(X)}\left(\mathrm{tr}(X)^2-\mathrm{tr}(X^2)\right).
% 	\end{align*}
% \end{lemm}
% \begin{proof}
% 	By the order of the eigenvalues we have $\mu_1\geq \mathrm{tr}(X)/d$.
% 	Now we have
% 	\begin{align*}
% 		\mathrm{tr}(X^2)= & \mu_1^2+\sum_{k=2}^{d}\mu_k^2 \\
% 		\leq & \mu_1^2+\left(\sum_{k=2}^{d}\mu_k\right)^2 \\
% 		= & \mu_1^2+(\mathrm{tr}(X)-\mu_1)^2 \\
% 		= & 2\mu_1(\mu_1-\mathrm{tr}(X))+\mathrm{tr}(X)^2,
% 	\end{align*}
% 	which gives the desired inequality upon rearranging the terms.
% \end{proof}

{Below we show that if $\{C_i\}$ and $\{D_j\}$ are a PSD factorization of a correlation $P'$ close to $P$, and $\{C_i\}$ forms a quantum measurement, then we can adjust each $C_i$ slightly to $C'_i$ such that they have rank 1, and the resulting correlation is still close to $P$.}

\begin{lemm}
	\label{lemm:robust-rankone}
	Suppose $P'$ is a $2d\times 2d$ correlation with $D(P',P)\leq\varepsilon$,
	{where $\varepsilon<\frac{1}{8d(d+1)^5}$}.
	Let $\{C_i\}$, $\{D_j\}$ be a PSD factorization of $P'$ such that $\sum_{i=1}^{2d}C_i=I_d$.
	{Then there exists a $2d\times 2d$ correlation $P''$ with $D(P,P'')\leq {18}(d+1)^4\varepsilon$}
	such that $P''$ admits a PSD factorization $\{C'_i\}$, $\{D_j\}$, and
	\begin{itemize}
		\item
			$\forall i\in [2d]$, $\lVert C'_i-C_i\rVert_{\text{op}}\leq 16(d+1)^6\varepsilon$, {where $\lVert \cdot\rVert_{\text{op}}$ denotes the operator norm};
		\item
			All the operators $C'_i$ are rank $1$;
		\item
			$\sum_{i=1}^{2d}C'_i=I_d$.
	\end{itemize}
\end{lemm}
\begin{proof}
    {We will adjust all the $2d$ operators $C_i$ by two steps, and the first step adjusts \{$C_{d+1},\cdots C_{2d}$\}.} For the given PSD factorization $\{C_i\}$ and $\{D_j\}$ of $P'$,
	we have
	\begin{align}
		\mathrm{rank}\left(\sum_{i=1}^{d}C_{d+i}\right)=\mathrm{rank}\left(\sum_{j=1}^{d}{D_{d+j}}\right)=d
	\end{align}
	by Lemma \ref{lemm:robust-psdrank}.
	Let
	\begin{align*}
		\tilde{C}_i= & \left(\sum_{i=1}^{d}C_{d+i}\right)^{-1/2}C_i\left(\sum_{i=1}^{d}C_{d+i}\right)^{-1/2}, \\
		\tilde{D}_j= & \left(\sum_{i=1}^{d}C_{d+i}\right)^{1/2}D_j\left(\sum_{i=1}^{d}C_{d+i}\right)^{1/2}.
	\end{align*}
	Then $\{\tilde{C}_i\}$ and $\{\tilde{D}_j\}$ are equivalent to $\{C_i\}$ and $\{D_j\}$, {and it holds that $\sum_{i=1}^{d} \tilde{C}_{d+i}=I_d$}.
	%Since equivalence transformation could be reversed and the construction of $P''$ does not involve changing the operators $D_j$,
	%in the rest of the proof we instead work with the equivalent PSD factorization $\{\tilde{C}_i\}$ and $\{\tilde{D}_j\}$ that satisfies $\sum_{i=1}^{d}\tilde{C}_i=I_d$.
	Consider $\tilde{D}_{d+j}$ where $j\in[d]$.
	Firstly we have $\mathrm{tr}(\tilde{D}_{d+j})=\sum_{{i}=1}^d\mathrm{tr}(\tilde{C}_{d+i}\tilde{D}_{d+j})\leq \frac{1}{d(d+1)^2}+d\varepsilon$.
	Thus
	\begin{align*}
		\left(\frac{1}{d(d+1)^2}-\varepsilon\right)^2 \leq & \mathrm{tr}(\tilde{C}_{d+j} \tilde{D}_{d+j})^2 \\
		\leq & \mathrm{tr}(\tilde{C}_{d+j}^2)\mathrm{tr}(\tilde{D}_{d+j}^2) \\
		\leq & \mathrm{tr}(\tilde{C}_{d+j}^2)\mathrm{tr}(\tilde{D}_{d+j})^2 \\
		\leq & \left(\frac{1}{d(d+1)^2}+d\varepsilon\right)^2\mathrm{tr}(\tilde{C}_{d+j}^2),
	\end{align*}
	giving
	\begin{align}
		\mathrm{tr}(\tilde{C}_{d+j}^2)\geq \left(\frac{\frac{1}{d(d+1)^2}-\varepsilon}{\frac{1}{d(d+1)^2}+d\varepsilon}\right)^2\geq 1-2d(d+1)^3\varepsilon.
	\end{align}
	We then have
	\begin{align}
		\sum_{i=1}^{d}(\mathrm{tr}(\tilde{C}_{d+i})-\mathrm{tr}(\tilde{C}_{d+i}^2))=d-\sum_{i=1}^{d}\mathrm{tr}(\tilde{C}_{d+i}^2)\leq 2d^2(d+1)^3\varepsilon,
	\end{align}
    {where we have utilized the relation $\sum_{i=1}^{d} \tilde{C}_{d+i}=I_d$.}
	Since $\sum_{i=1}^{d} \tilde{C}_{d+i}=I_d$, we also have $\mathrm{tr}(\tilde{C}_{d+i})\geq\mathrm{tr}(\tilde{C}_{d+i}^2)$,
	thus for $\forall i\in[d]$, it holds that
	\begin{align*}
		\mathrm{tr}(\tilde{C}_{d+i})-\mathrm{tr}(\tilde{C}_{d+i}^2)\leq 2d^2(d+1)^3\varepsilon.
	\end{align*}
	For any eigenvalue $\mu$ of $\tilde{C}_{d+i}$, we have $\mu\in[0,1]$,
	and
	\begin{align}
		\frac{1}{2}\min\{\mu,1-\mu\}\leq \mu(1-\mu)\leq \mathrm{tr}(\tilde{C}_{d+i})-\mathrm{tr}(\tilde{C}_{d+i}^2)\leq 2d^2(d+1)^3\varepsilon,
	\end{align}
	hence either $\mu\in[0,4d^2(d+1)^3\varepsilon]$ or $\mu\in[1-4d^2(d+1)^3\varepsilon,1]$. {This fact can be seen as below. Suppose} none of $\mu$ belongs to the latter case, then
	\begin{align}
		\mathrm{tr}(\tilde{C}_{d+i}^2)\leq \mathrm{tr}(\tilde{C}_{d+i})\leq 4d^3(d+1)^3\varepsilon<1-2d(d+1)^3\varepsilon,
	\end{align}
	which is a contradiction.
	On the other hand, if $\tilde{C}_{d+i}$ has two eigenvalues that belong to the latter case, then $\mathrm{tr}(\tilde{C}_{d+i})>1.5$,
	hence
	\begin{align*}
		d=\mathrm{tr}(\tilde{C}_{d+i})+\sum_{i'\neq i}\mathrm{tr}(\tilde{C}_{d+i'})>1.5+\sum_{i'\neq i}\mathrm{tr}(\tilde{C}_{d+i'}^2)\geq d+0.5-2(d-1)d(d+1)^3\varepsilon,
	\end{align*}
	which is again a contradiction.
	Therefore, each $\tilde{C}_{d+i}$ has exactly one eigenvalue within $[1-4d^2(d+1)^3\varepsilon,1]$ (i.e., close to $1$) and the rest in $[0,4d^2(d+1)^3\varepsilon]$ (i.e., close to $0$).
	Now suppose $v_i$ is the unit vector corresponding to the eigenvalue of $\tilde{C}_{d+i}$ that is close to $1$.
	We have thus established that $\lVert \tilde{C}_{d+i}-v_iv_i^{\dagger}\rVert_{\text{op}}\leq 4d^2(d+1)^3\varepsilon$ for all $i\in[d]$.
	
	Consider $X=\sum_{i=1}^{d}v_i v_i^{\dagger}$;
	since it holds that
	\begin{align}
		\left\lVert X-I\right\rVert_{\text{op}}\leq \sum_{i=1}^{d}\left\lVert v_iv_i^{\dagger} - \tilde{C}_{d+i}\right\rVert_{\text{op}}\leq 4d^3(d+1)^3\varepsilon,
	\end{align}
	the operator $X$ is invertible.
	Therefore we may define $\hat{C}_{d+i}=X^{-1/2}v_i v_i^{\dagger} X^{-1/2}$
	so that $\sum_{i=1}^{d}\hat{C}_{d+i}=I_d$.
	{It can be verified} that
	\begin{align}
		(1-4d^3(d+1)^3\varepsilon)I\leq X^{-1/2}\leq (1+4d^3(d+1)^3\varepsilon)I.
	\end{align}
	This implies that
	\begin{align*}
		\left\lVert\tilde{C}_{d+i}-\hat{C}_{d+i}\right\rVert_{\text{op}}
		\leq & \left\lVert \tilde{C}_{d+i}-v_iv_i^{\dagger}\right\rVert_{\text{op}}+\left\lVert v_iv_i^{\dagger}-X^{-1/2}v_iv_i^{\dagger} X^{-1/2}\right\rVert_{\text{op}} \\
		= & \left\lVert \tilde{C}_{d+i}-v_iv_i^{\dagger}\right\rVert_{\text{op}}+\left\lVert (I-X^{-1/2}) v_iv_i^{\dagger}X^{-1/2}-v_iv_i^{\dagger}(X^{-1/2}-I)\right\rVert_{\text{op}} \\
		\leq & \left\lVert \tilde{C}_{d+i}-v_iv_i^{\dagger}\right\rVert_{\text{op}}+\left\lVert I-X^{-1/2}\right\rVert_{\text{op}}\left(1+\left\lVert X^{-1/2}\right\rVert_{\text{op}}\right) \\
		\leq & 4d^2(d+1)^3\varepsilon+12d^3(d+1)^3\varepsilon \\
		\leq & 16d^3(d+1)^3\varepsilon,
	\end{align*}
	which upon reversing the equivalence transformation gives
	\begin{align*}
		& \left\lVert C_{d+i}-\left(\sum_{i=1}^{d}C_{d+i}\right)^{1/2}\hat{C}_{d+i}\left(\sum_{i=1}^{d}C_{d+i}\right)^{1/2}\right\rVert_{\text{op}} \\
		= & \left\lVert\left(\sum_{i=1}^{d}C_{d+i}\right)^{1/2}\left(\tilde{C}_{d+i}-\hat{C}_{d+i}\right)\left(\sum_{i=1}^{d}C_{d+i}\right)^{1/2}\right\rVert_{\text{op}} \\
		\leq & \left\lVert\sum_{i=1}^{d}C_{d+i}\right\rVert_{\text{op}}\left\lVert\tilde{C}_{d+i}-\hat{C}_{d+i}\right\rVert_{\text{op}} \\
		\leq & 16d^3(d+1)^3\varepsilon,
	\end{align*}
    {where we have utilized the fact that $\sum_{i=1}^{2d}C_i=I_d$, and thus $\sum_{i=1}^{d}C_{d+i}\leq I_d$}.
	Let $P''$ be a $2d\times 2d$ correlation given by
	\begin{align}
		P''_{ij}=\begin{cases}
			\mathrm{tr}(\tilde{C}_i \tilde{D}_j), & i\in[d] \\
			\mathrm{tr}(\hat{C}_i \tilde{D}_j), & {i\in[2d]-[d]}
		\end{cases}.
	\end{align}
	For $j\in[2d]$, {since $\sum_{i=1}^{d}\hat{C}_{d+i}=I_d$}, we have
	\begin{align}
		\mathrm{tr}(\tilde{D}_j)=\sum_{i=1}^{d}\mathrm{tr}(\tilde{C}_{d+i}\tilde{D}_j)\leq \frac{1}{(d+1)^2}+d\varepsilon\leq {\frac{17}{16(d+1)^{2}}},
	\end{align}
	hence by H\"{o}lder's inequality we have
	\begin{align}
		D(P'',P')= & \max_{i\in[d],j\in[2d]}\left|\mathrm{tr}((\tilde{C}_{d+i}-\hat{C}_{d+i})\tilde{D}_j)\right| \\
		\leq & 16d^3(d+1)^3\varepsilon\left(\max_{j\in[2d]}\mathrm{tr}(\tilde{D}_j)\right) \\
		\leq & {17d^3(d+1)}\varepsilon,
	\end{align}
	which also implies that {$D(P'',P)\leq{17(d+1)^4}\varepsilon$}.
	We now apply the inverse of the equivalence transformation on $\hat{C}_{d+i}$ to obtain $\{C'_{d+i}\}$, i.e.,
    \begin{align*}
    {C'_{d+i}=\left(\sum_{i=1}^{d}C_{d+i}\right)^{1/2}\hat{C}_{d+i}\left(\sum_{i=1}^{d}C_{d+i}\right)^{1/2}.}
    \end{align*}
	Then it could be verified that $C'_{d+i}$ is rank 1 for any $i\in[d]$ and $\sum_{i\in[d]}C'_{d+i}=\sum_{i\in[d]}C_{d+i}$.

	{In the second step of adjusting $C_i$, we repeat the whole process for \{$C_1,\cdots,C_d$\} and yield} a final correlation $P'''$ such that {$D(P''',P)\leq$ }{$18(d+1)^4\varepsilon$} and $P'''$ admits the desired PSD factorization.
\end{proof}

{The next lemma slightly adjusts the above $C'_i$ further without changing the generated correlation much such that more desirable mathematical structures show up, which will be useful later}.

\begin{lemm}
	\label{lemm:robust-rectify}
	Suppose $P'$ is a correlation with $D(P',P)\leq\varepsilon$
	where $\varepsilon<{(18)^{-8}(d(d+1)^2)^{-4}}$.
	Furthermore assuming that $P'$ admits a rank $d$ PSD factorization $\{C_i\}$, $\{D_j\}$
	where all operators $\{C_i\}$ are of rank 1 and $\sum_{i=1}^{2d}C_i=I_d$.
	Then there exists a correlation $P''$ with $D(P,P'')\leq{64d^{-1/2}\varepsilon^{1/8}}$
	that admits a PSD factorization $\{C'_i\}$ and $\{D_j\}$ such that:
	\begin{enumerate}
		\item
			$\forall i\in[2d], \lVert C_i-C'_i\rVert_{\text{op}}\leq {64d^{-1/2}(d+1)\varepsilon^{1/8}}$;
		\item
			$\forall i\in[2d], \mathrm{rank}(C'_i)=1$;
		\item
            {$\sum_{i=1}^dC'_{i}=\frac{d}{d+1}I_d$;}
			%The operators $C'_{i}$($i\in[d]$) is an orthogonal resolution of $\frac{d}{d+1}I_d$;
		\item
		{$\sum_{i=1}^dC'_{d+i}=\frac{1}{d+1}I_d$;}
            %The operators $C'_{d+i}$($i\in[d]$) is an orthogonal resolution of $\frac{1}{d+1}I_d$;
		\item
			{$\forall i\in[d], \mathrm{tr}(C'_i C'_{d+1})=\frac{1}{(d+1)^2}$.}
	\end{enumerate}
\end{lemm}
\begin{proof}
	{Firstly, by Lemma \ref{lemm:robust-psdrank} we know that the PSD rank of $P'$ is $d$,
	then it can be verified that the vectors corresponding to the operators $C_i$ {for all} $i\in[d]$ are linearly independent; a similar conclusion holds for all $C_{d+i}$ where $i\in[d]$.}
	Therefore, the operator $H$ given by $H=\left(\sum_{i=1}^{d}C_i\right)^{-1/2}$ is positive and invertible.
	Let $\tilde{C}_i=HC_i H^{\dagger}$ and $\tilde{D}_j=(H^{\dagger})^{-1}D_j H^{-1}$;
	since the operators $\tilde{C}_i$ form a rank 1 resolution of identity, {i.e., $\sum_{i=1}^{d}\tilde{C}_i=I_d$, then for any $i\in[d]$} there are orthonormal vectors $e_i$
	such that $\tilde{C}_i=e_i e_i^{\dagger}$.
	Thus, for any $i\in[d]$ we have
	\begin{align*}
		\left\lVert\tilde{D}_i-\frac{d}{(d+1)^2}e_i e_i^{\dagger}\right\rVert_{\text{op}}^2
		\leq & \left\lVert\tilde{D}_i-\frac{d}{(d+1)^2}e_i e_i^{\dagger}\right\rVert_{\text{F}}^2 \\
		= & {\mathrm{tr}(\tilde{D}_i^2)-\frac{2d}{(d+1)^2}\mathrm{tr}(\tilde{D}_i e_ie_i^{\dagger})+\frac{d^2}{(d+1)^4}} \\
		\leq & {\mathrm{tr}(\tilde{D}_i)^2-\frac{2d}{(d+1)^2}\mathrm{tr}(\tilde{D}_i e_ie_i^{\dagger})+\frac{d^2}{(d+1)^4}} \\
		\leq & {\left(\frac{d}{(d+1)^2}+d\varepsilon\right)^2-\frac{2d}{(d+1)^2}\left(\frac{d}{(d+1)^2}-\varepsilon\right)+\frac{d^2}{(d+1)^4}} \\
		= & {\frac{2d}{d+1}\varepsilon+d^2\varepsilon^2} \\
		\leq & 3\varepsilon.
	\end{align*}
	{where $\lVert \cdot \rVert_{\text{F}}$ denotes the Frobenius norm, and we have utilized the fact that $\mathrm{tr}(\tilde{D}_i)=\sum_{j=1}^d\mathrm{tr}(\tilde{C}_j\tilde{D}_i)$.}
	Thus
	\begin{align*}
		\left\lVert\sum_{j=1}^{d}\tilde{D}_j-\frac{d}{(d+1)^2}I_d\right\rVert_{\text{op}}\leq 2d\sqrt{\varepsilon}.
	\end{align*}
	For any $i\in[d]$, we have
	\begin{align*}
		\left|\mathrm{tr}(\tilde{C}_{d+i})-\frac{1}{d}\right|=
		& \left|\frac{(d+1)^2}{d}\mathrm{tr}\left(\tilde{C}_{d+i}\cdot \frac{d}{(d+1)^2}I_d\right)-\frac{1}{d}\right| \\
		= & \left|\frac{(d+1)^2}{d}\mathrm{tr}\left(\tilde{C}_{d+i}\left({\frac{d}{(d+1)^2}I_d}-\sum_{j=1}^{d}\tilde{D}_j+\sum_{j=1}^{d}\tilde{D}_j\right)\right)-\frac{1}{d}\right| \\
		= & \left|\frac{(d+1)^2}{d}\sum_{j=1}^{d}\left({\mathrm{tr}(\tilde{C}_{d+i}\tilde{D}_j)-\frac{1}{d(d+1)^2}}\right)+\frac{(d+1)^2}{d}\mathrm{tr}\left(\tilde{C}_{d+i}\left(\frac{d}{(d+1)^2}I_d-\sum_{j=1}^{d}\tilde{D}_j\right)\right)\right| \\
		\leq & \sum_{j=1}^{d}\frac{(d+1)^2}{d}\left|P'_{d+i,j}-\frac{1}{d(d+1)^2}\right|+\frac{(d+1)^2}{d}\mathrm{tr}(\tilde{C}_{d+i})\left\lVert \frac{d}{(d+1)^2}I_d-\sum_{j=1}^{d}\tilde{D}_j\right\rVert_{\text{op}} \\
		\leq & (d+1)^2\varepsilon+{2(d+1)^2\sqrt{\varepsilon}\mathrm{tr}(\tilde{C}_{d+i})}.
	\end{align*}
	Since $\mathrm{tr}(\tilde{C}_{d+i})\leq \left|\mathrm{tr}(\tilde{C}_{d+i})-\frac{1}{d}\right|+\frac{1}{d}$, we have
	\begin{align*}
		\left|\mathrm{tr}(\tilde{C}_{d+i})-\frac{1}{d}\right|\leq 2(d+1)^2\sqrt{\varepsilon}\left|\mathrm{tr}(\tilde{C}_{d+i})-\frac{1}{d}\right|+(d+1)^2\varepsilon+\frac{2(d+1)^2}{d}\varepsilon^{1/2}.
	\end{align*}
	Thus
	\begin{align}\label{eq:cd+i}
		\left|\mathrm{tr}(\tilde{C}_{d+i})-\frac{1}{d}\right|\leq {4(d+1)\sqrt{\varepsilon}}.
	\end{align}
	For $j\in[d]$, we have
	\begin{align}\label{eq:dd+j}
		\left|\mathrm{tr}(\tilde{D}_{{d+j}})-\frac{1}{(d+1)^2}\right|\leq\sum_{i=1}^{d}\left|\mathrm{tr}(\tilde{D}_{{d+j}}\tilde{C}_i)-\frac{1}{d(d+1)^2}\right|\leq d\varepsilon,
	\end{align}
    {where we have utilized the relation $\sum_{i=1}^{d}\tilde{C}_i=I_d$.}

	Below we prove that $\left\lVert\sum_{i=1}^{d}\tilde{C}_{d+i}-\frac{1}{d}I_d\right\rVert_{\text{op}}\leq$ {$5d^{1/2}(d+1)^{1/2}\varepsilon^{1/4}$}. For this, note that the supports of $\tilde{C}_{d+i}$ and $\tilde{C}_{d+i'}$ span a subspace $V_{ii'}$ of dimension {at most} $2$, whose orthogonal {projection} will be denoted as {$\Pi_{ii'}:\mathbb{C}^{d}\to V_{ii'}$}.
	Let $\hat{D}_{d+i}=\Pi_{ii'}\tilde{D}_{d+i}\Pi_{ii'}^{\dagger}$ and $\hat{D}_{d+i'}=\Pi_{ii'}\tilde{D}_{d+i'}\Pi_{ii'}^{\dagger}$.
	Then in $V_{ii'}$ we have {(the operators in $V_{ii'}$ are understood as $2\times 2$ matrices)}
	\begin{align*}
	& \left\lVert(\tilde{C}_{d+i}+\tilde{C}_{d+i'})^{1/2}\hat{D}_{d+i}(\tilde{C}_{d+i}+\tilde{C}_{d+i'})^{1/2}-\frac{1}{d(d+1)^2}(\tilde{C}_{d+i}+\tilde{C}_{d+i'})^{-1/2}\tilde{C}_{d+i}(\tilde{C}_{d+i}+\tilde{C}_{d+i'})^{-1/2}\right\rVert_{\text{F}}^2 \\
	{=} & \mathrm{tr}\left(\left((\tilde{C}_{d+i}+\tilde{C}_{d+i'})\hat{D}_{d+i}\right)^2\right)-\frac{2}{d(d+1)^2}\mathrm{tr}(\tilde{C}_{d+i}\hat{D}_{d+i})+\frac{1}{d^2(d+1)^4} \\
	\leq & \left(\frac{1}{d(d+1)^2}+2\varepsilon\right)^2-\frac{2}{d(d+1)^2}\left(\frac{1}{d(d+1)^2}-\varepsilon\right)+\frac{1}{d^2(d+1)^4} \\
	= & \frac{6}{d(d+1)^2}\varepsilon+4\varepsilon^2,
\end{align*}
where {the first equation is the fact that $\lVert A\rVert_{\text{F}}^2=\mathrm{tr}(AA^\dagger)$ for any matrix $A$, and} the first inequality holds due to the fact that $(\tilde{C}_{d+i}+\tilde{C}_{d+i'})^{-1/2}\tilde{C}_{d+i}(\tilde{C}_{d+i}+\tilde{C}_{d+i'})^{-1/2}$ and $(\tilde{C}_{d+i}+\tilde{C}_{d+i'})^{-1/2}\tilde{C}_{d+i'}(\tilde{C}_{d+i}+\tilde{C}_{d+i'})^{-1/2}$ form a rank 1 orthogonal resolution of {$I_{V_{ii'}}$}, {as they are rank-1 and sum to the identity}. When $\varepsilon$ is sufficiently small, we have that
\begin{align*}
	& \left\lVert(\tilde{C}_{d+i}+\tilde{C}_{d+i'})^{1/2}\hat{D}_{d+i}(\tilde{C}_{d+i}+\tilde{C}_{d+i'})^{1/2}-\frac{1}{d(d+1)^2}(\tilde{C}_{d+i}+\tilde{C}_{d+i'})^{-1/2}\tilde{C}_{d+i}(\tilde{C}_{d+i}+\tilde{C}_{d+i'})^{-1/2}\right\rVert_{\text{op}} \\
	\leq & 3d^{-1/2}(d+1)^{-1}\varepsilon^{1/2}.
\end{align*}
Summing a similar inequality for $\hat{D}_{d+i'}$ gives
\begin{align*}
	\left\lVert(\tilde{C}_{d+i}+\tilde{C}_{d+i'})^{1/2}(\hat{D}_{d+i}+\hat{D}_{d+i'})(\tilde{C}_{d+i}+\tilde{C}_{d+i'})^{1/2}-\frac{1}{d(d+1)^2}{I_{V_{ii'}}}\right\rVert_{\text{op}}\leq 6d^{-1/2}(d+1)^{-1}\varepsilon^{1/2}.
\end{align*}
Therefore
\begin{align}\label{eq:optotrace}
    \left\lVert(\hat{D}_{d+i}+\hat{D}_{d+i'})-\frac{1}{d(d+1)^2}(\tilde{C}_{d+i}+\tilde{C}_{d+i'})^{-1}\right\rVert_{\text{op}}\leq 6d^{-1/2}(d+1)^{-1}\varepsilon^{1/2}\lVert(\tilde{C}_{d+i}+\tilde{C}_{d+i'})^{-1}\rVert_{\text{op}}.
\end{align}
Since $\tilde{C}_{d+i}$ and $\tilde{C}_{d+i'}$ are both rank-1, it can be verified that in the subspace $V_{ii'}$ we have
\begin{align*}
	(\tilde{C}_{d+i}+\tilde{C}_{d+i'})^{-1}=\frac{1}{\mathrm{tr}(\tilde{C}_{d+i})\mathrm{tr}(\tilde{C}_{d+i'})-\mathrm{tr}(\tilde{C}_{d+i}\tilde{C}_{d+i'})}\left((\mathrm{tr}(\tilde{C}_{d+i})+\mathrm{tr}(\tilde{C}_{d+i'}))I_{V_{ii'}}-(\tilde{C}_{d+i}+\tilde{C}_{d+i'})\right);
\end{align*}
in particular,
\begin{align*}
	\mathrm{tr}\left((\tilde{C}_{d+i}+\tilde{C}_{d+i'})^{-1}\right)=\frac{\mathrm{tr}(\tilde{C}_{d+i})+\mathrm{tr}(\tilde{C}_{d+i'})}{\mathrm{tr}(\tilde{C}_{d+i})\mathrm{tr}(\tilde{C}_{d+i'})-\mathrm{tr}(\tilde{C}_{d+i} \tilde{C}_{d+i'})}.
\end{align*}
We use this to estimate $\mathrm{tr}(\hat{D}_{d+i}+\hat{D}_{d+i'})$:
\begin{align*}
	& 2\left(\frac{1}{(d+1)^2}+d\varepsilon\right) \\
	\geq & \mathrm{tr}(\hat{D}_{d+i}+\hat{D}_{d+i'}) \\
	= & \frac{1}{d(d+1)^2}\mathrm{tr}\left((\tilde{C}_{d+i}+\tilde{C}_{d+i'})^{-1}\right)+\mathrm{tr}\left((\hat{D}_{d+i}+\hat{D}_{d+i'})-\frac{1}{d(d+1)^2}(\tilde{C}_{d+i}+\tilde{C}_{d+i'})^{-1}\right) \\
	\geq & \left(\frac{1}{d(d+1)^2}-12d^{-1/2}(d+1)^{-1}\varepsilon^{1/2}\right)\mathrm{tr}\left((\tilde{C}_{d+i}+\tilde{C}_{d+i'})^{-1}\right) \\
	\geq & \left(\frac{1}{d(d+1)^2}-12d^{-1/2}(d+1)^{-1}\varepsilon^{1/2}\right)\frac{2\left(\frac{1}{d}-{4(d+1)}\varepsilon^{1/2}\right)}{\mathrm{tr}(\tilde{C}_{d+i})\mathrm{tr}(\tilde{C}_{d+i'})-\mathrm{tr}(\tilde{C}_{d+i}\tilde{C}_{d+i'})},
\end{align*}
{where the first inequality comes from Eq.\eqref{eq:dd+j}, the second inequality uses the relation in Eq.\eqref{eq:optotrace} and the fact that $\lVert A\rVert_{\text{op}}\leq\mathrm{tr}(A)$ for any positive matrix $A$, and the third inequality comes from Eq.\eqref{eq:cd+i}.}
When $\varepsilon$ is sufficiently small, this gives
\begin{align*}
	\mathrm{tr}(\tilde{C}_{d+i}\tilde{C}_{d+i'})\leq & \left(\frac{1}{d}+{4(d+1)}\varepsilon^{1/2}\right)^2-\left(\frac{\frac{1}{d}-{4(d+1)}\varepsilon^{1/2}}{\frac{1}{(d+1)^2}+d\varepsilon}\right)\left(\frac{1}{d(d+1)^2}-12d^{-1/2}(d+1)^{-1}\varepsilon^{1/2}\right) \\
	\leq & {23\frac{d+1}{d}}\varepsilon^{1/2},
\end{align*}
{where the first inequality has utilized Eq.\eqref{eq:cd+i}}.
Thus
\begin{align*}
	& \left\lVert\sum_{i=1}^{d}\tilde{C}_{d+i}-\frac{1}{d}I_d\right\rVert_{\text{F}}^2 \\
	= & \sum_{i=1}^{d}\mathrm{tr}(\tilde{C}_{d+i}^2)+\sum_{i\neq i'}\mathrm{tr}(\tilde{C}_{d+i}\tilde{C}_{d+i'})-\frac{2}{d}\sum_{i=1}^{d}\mathrm{tr}(\tilde{C}_{d+i})+\frac{1}{d} \\
	\leq & d\left(\frac{1}{d}+{4(d+1)}\varepsilon^{1/2}\right)^{{2}}+d(d-1)\left({23\frac{d+1}{d}}\varepsilon^{1/2}\right)-2\left(\frac{1}{d}-{4(d+1)}\varepsilon^{1/2}\right)+\frac{1}{d} \\
	\leq & {24d(d+1)}\varepsilon^{1/2}.
\end{align*}
which implies that
\begin{align}
	\left\lVert\sum_{i=1}^{d}\tilde{C}_{d+i}-\frac{1}{d}I_d\right\rVert_{\text{op}}\leq {5d^{1/2}(d+1)^{1/2}}\varepsilon^{1/4}.
\end{align}

	Next, {since $\sum_{i=1}^{2d}C_i=I_d$, we have} that
	\begin{align}
		\left\lVert H^2-\frac{d+1}{d}I_d\right\rVert_{\text{op}}= & \left\lVert \sum_{i=1}^{2d}HC_i H^{\dagger}-\frac{d+1}{d}I_d\right\rVert_{\text{op}}  \\
		= & \left\lVert \sum_{i=1}^{2d}\tilde{C}_i-\frac{d+1}{d}I_d\right\rVert_{\text{op}} \\
		= & \left\lVert \sum_{i=d+1}^{2d}\tilde{C}_i-\frac{1}{d}I_d\right\rVert_{\text{op}} \\
		\leq & {5d^{1/2}(d+1)^{1/2}}\varepsilon^{1/4},
	\end{align}
	which, under the assumption that $\varepsilon<{(5)^{-4}(d+1)^{-8}}$, also implies that
	\begin{align}
		\left\lVert \sum_{i=1}^{d}C_i-\frac{d}{d+1}I_d\right\rVert_{\text{op}}= & \left\lVert H^{-2}-\frac{d}{d+1}I_d\right\rVert_{\text{op}} \\
		\leq & {5d^{1/2}(d+1)^{1/2}}\varepsilon^{1/4}, \\
		\left\lVert\sqrt{\frac{d}{d+1}}H-I_d\right\rVert_{\text{op}}\leq & {5d}\varepsilon^{1/4}, \\
		\left\lVert\sqrt{\frac{1}{d+1}}(I-H^{-2})^{-1/2}-I_d\right\rVert_{\text{op}}\leq & {5d^{1/2}(d+1)^{3/2}}\varepsilon^{1/4},
	\end{align}
    {where we have utilized the definition of $H$, i.e., $H=\left(\sum_{i=1}^{d}C_i\right)^{-1/2}$.}
	%\blue{(the above condition is obatined by setting $\frac{1}{\frac{d+1}{d}-4(d+1)^{\frac{7}{4}} \varepsilon^{\frac{1}{8}}} \leq\frac{d}{d+1}+4(d+1)^{\frac{7}{4}} \varepsilon^{\frac{1}{8}}$)}\\
	Consider a $2d\times 2d$ correlation $P^{(1)}$ given by the operators $\hat{C}_i=\frac{d}{d+1}H C_i H$ for $i\in[d]$,
	$\hat{C}_i=\frac{1}{d+1}(I-H^{-2})^{-1/2}C_i(I-H^{-2})^{-1/2}$ for $i\in[2d]-[d]$, and $D_j$ for $j\in[2d]$. {Then it holds that $\sum_{i=1}^d\hat{C}_i=\frac{d}{d+1}I_d$ and $\sum_{i=1}^d\hat{C}_{i+d}=\frac{1}{d+1}I_d$.}
	We first bound the trace of $D_j$ for $j\in[d]$ as
	\begin{align}
		\mathrm{tr}(D_j)=\sum_{i=1}^{2d}\mathrm{tr}(C_i D_j)\leq \frac{1}{d+1}+{2}d\varepsilon,
	\end{align}
	and for $j\in[2d]-[d]$ as
	\begin{align}
		\mathrm{tr}(D_j)\leq d^{-1}(d+1)^{-1}+{2}d\varepsilon.
	\end{align}
	Then for $i\in[d]$ and $j\in [2d]$, since $\lVert C_i\rVert_{\text{op}}\leq {1}$ {(recall that $\sum_{i=1}^{2d}C_i=I_d$)},
	we have
	\begin{align*}
		\left|P^{(1)}_{ij}-P'_{ij}\right|= & \left|\mathrm{tr}\left(\left(\frac{d}{d+1}HC_i H-C_i\right)D_j\right)\right| \\
		= & \left|\mathrm{tr}\left(\left(\left(\sqrt{\frac{d}{d+1}}H\right)C_i\left(\sqrt{\frac{d}{d+1}}H-I_d\right)+\left(\sqrt{\frac{d}{d+1}}H-I_d\right)C_i\right)D_j\right)\right| \\
		\leq & \left\lVert\sqrt{\frac{d}{d+1}}H-I_d\right\rVert_{\text{op}}\lVert C_i\rVert_{\text{op}}\left(1+\left\lVert\sqrt{\frac{d}{d+1}}H\right\rVert_{\text{op}}\right)\mathrm{tr}(D_j) \\
		\leq & {18\varepsilon^{1/4}},
	\end{align*}
	while similarly for $i\in[2d]-[d]$ and $j\in [2d]$, it holds that
	\begin{align*}
		\left|P^{(1)}_{ij}-P'_{ij}\right|= &
		\left|\mathrm{tr}\left(\left(\left(\frac{1}{d{+}1}(I_d-H^{-2})^{-1/2}C_i(I_d-H^{-2})^{-1/2}\right)-C_i\right)D_j\right)\right| \\
		\leq & \left\lVert\sqrt{\frac{1}{d+1}}(I_d-H^{-2})^{-1/2}-I_d\right\rVert_{\text{op}}\lVert C_i\rVert_{\text{op}}\left(1+\left\lVert\sqrt{\frac{1}{d+1}}(I_d-H^{-2})^{-1/2}\right\rVert_{\text{op}}\right)\mathrm{tr}(D_j) \\
		\leq & {36\varepsilon^{1/4}},
	\end{align*}
	where we have used that {$\sum_{j=d+1}^{2d}C_i=I_d-H^{-2}$, and thus}
	\begin{align*}
		\lVert C_i\rVert_{\text{op}}\leq \lVert I_d-H^{-2}\rVert_{\text{op}}\leq \frac{1}{d+1}+16d\varepsilon^{1/4}\leq \frac{2}{d+1}
	\end{align*}
	for any $i\in[2d]-[d]$.
	Therefore
	\begin{align}
		D(P^{(1)},P')\leq {36\varepsilon^{1/4}}.
	\end{align}
	{Based on the definitions of $P^{(1)}$ and $P'$, this also shows that for any $i\in[2d]$}
	\begin{align}
		\left\lVert \hat{C}_i-C_i\right\rVert_{\text{op}}\leq {30(d+1)\varepsilon^{1/4}}.
	\end{align}

	To complete the proof, we need to regulate $\mathrm{tr}(C_i C_{d+1})$ so that
	$\mathrm{tr}(C_i C_{d+1})=\frac{1}{(d+1)^2}$ for any $i\in[d]$.
	First for any $i\in[d]$, we write
	\begin{align*}
		\hat{C}_i= & \frac{d}{d+1}u_i u_i^{\dagger}, \\
		\hat{C}_{d+i}= & \frac{1}{d+1}w_i w_i^{\dagger},
	\end{align*}
	where $u_i, w_i$ are unit vectors in $\mathbb{C}^d$.
	This is possible since $\hat{C}_1,\cdots,\hat{C}_d$ form a rank $1$ orthogonal resolution of $\frac{d}{d+1}I_d$
	and $\hat{C}_{d+1},\cdots, \hat{C}_{2d}$ form a rank $1$ orthogonal resolution of $\frac{1}{d+1}I_d$.
	For $i\in[d]$, we also have
	\begin{align*}
		& \left\lVert D_i-\frac{1}{d+1}u_i u_i^{\dagger}\right\rVert_{\text{F}}^2 \\
		= & \lVert D_i\rVert_{\text{F}}^2-\frac{2}{d+1}\mathrm{tr}((u_i u_i^{\dagger})D_i)+\frac{1}{(d+1)^2} \\
		= & \lVert D_i\rVert_{\text{F}}^2-\frac{2}{d}\mathrm{tr}({\hat{C}_i} D_i)+\frac{1}{(d+1)^2} \\
		\leq & \left({\frac{1}{d+1}+2d\varepsilon}\right)^2-\frac{2}{d}\left(\frac{d}{(d+1)^2}-\varepsilon-{18\varepsilon^{1/4}}\right)+\frac{1}{(d+1)^2} \\
		\leq & {42d^{-1}\varepsilon^{1/4}}.
	\end{align*}
	which implies that
	\begin{align}
		\left\lVert D_i-\frac{1}{d+1}u_i u_i^{\dagger}\right\rVert_{\text{op}}\leq {7d^{-1/2}\varepsilon^{1/8}}.
	\end{align}
	Hence
	\begin{align*}
		& \left|\frac{d}{(d+1)^2}|u_i^{\dagger} w_1|^2-\frac{1}{(d+1)^2}\right| \\
		= & \left|\mathrm{tr}(\hat{C}_{d+1}\hat{C}_i)-\frac{1}{(d+1)^2}\right| \\
		= & \left|\mathrm{tr}\left(\hat{C}_{d+1}\left(\left(\frac{d}{d+1}u_i u_i^{\dagger}-d D_i\right)+d D_i\right)\right)-\frac{1}{(d+1)^2}\right| \\
		\leq & \left|d\mathrm{tr}(\hat{C}_{d+1} D_i)-\frac{1}{(d+1)^2}\right|+\left\lVert\frac{d}{d+1}u_i u_i^{\dagger}-d D_i\right\rVert_{\text{op}}\mathrm{tr}(\hat{C}_{d+1}) \\
		\leq & {d\left|P^{(1)}_{d+1,j}-\frac{1}{d(d+1)^2}\right|+{\frac{7d^{1/2}}{d+1}\varepsilon^{1/8}}} \\
		\leq & {d\varepsilon+36d\varepsilon^{1/4}+7d^{-1/2}\varepsilon^{1/8}} \\
		\leq & {10d^{-1/2}\varepsilon^{1/8}}.
	\end{align*}
	Define $\delta_i=u_i^{\dagger} w_1$; then $w_1=\sum_{i=1}^{d}\delta_i u_i$.
	The previous result then reads
	\begin{align}
		\left||\delta_i|^2-\frac{1}{d}\right|\leq {10d^{-3/2}(d+1)^2\varepsilon^{1/8}};
	\end{align}
	in particular, we have $\delta_i\neq 0$ for any $\in[d]$.
	Consider the unit vector ${\tilde{w}}$ given by
	\begin{align}
		{\tilde{w}}=\sum_{i=1}^{d}\frac{\delta_i}{|\delta_i|}\cdot\frac{1}{\sqrt{d}}u_i.
	\end{align}
	We have
	\begin{align*}
		\lVert w_1-{\tilde{w}}\rVert_{{2}}^2= & \sum_{i=1}^{d}\left|\delta_i-\frac{1}{\sqrt{d}}|\delta_i|^{-1}\delta_i\right|^{{2}} \\
		= & \sum_{i=1}^{d}\left(|\delta_i|-\frac{1}{\sqrt{d}}\right)^2 \\
		\leq & {100d^{-1}(d+1)^4\varepsilon^{1/4}}.
	\end{align*}
	Consider the unitary operator $U$ over $\mathbb{C}^d$ given by
	\begin{align*}
		Ux=x-2\frac{w_1-\tilde{w}}{\lVert w_1-\tilde{w}\rVert_{{2}}}\left(\frac{w_1-\tilde{w}}{\lVert w_1-\tilde{w}\rVert_{{2}}}\right)^{\dagger}x.
	\end{align*}
	Then $Uw_1=\tilde{w}$, hence for any $2\leq i\leq d$,
	\begin{align*}
		\lVert w_i-Uw_i\rVert_{{2}}
		= & {2|(w_1-\tilde{w})^{\dagger}(w_i)|} \\
		\leq & 2\lVert w_1-\tilde{w}\rVert_{{2}} \\
		% \leq & 3d(d+1)^2\varepsilon^{1/2}.
		\leq & {20d^{-1/2}(d+1)^2\varepsilon^{1/8}}.
	\end{align*}
	Therefore, for any $i\in[d]$ we have
	\begin{align*}
		& \left\lVert \frac{1}{d+1}(Uw_i)(Uw_i)^{\dagger}-\frac{1}{d+1}w_iw_i^{\dagger}\right\rVert_{\text{op}} \\
		\leq & \frac{1}{d+1}\left(\lVert(Uw_i-w_i)(Uw_i-w_i)^{\dagger}\rVert_{\text{op}}+\lVert(Uw_i-w_i)w_i^{\dagger}\rVert_{\text{op}}+\lVert w_i(Uw_i-w_i)^{\dagger}\rVert_{\text{op}}\right) \\
		\leq & {400d^{-1}(d+1)^3\varepsilon^{1/4}+40d^{-1/2}(d+1)\varepsilon^{1/8}} \\
		\leq & {60d^{-1/2}(d+1)\varepsilon^{1/8}}.
	\end{align*}
	Now consider the correlation $P^{(2)}$ given by the operators $\hat{C}_i$ for $i\in[d]$,
	$\frac{1}{d+1}(Uw_i)(Uw_i)^{\dagger}$ for $i\in[2d]-[d]$, and $D_j$ for $j\in[2d]$.
	For any $i\in[d]$ and $j\in[2d]$, we have
	\begin{align}
		|P^{(2)}_{(d+i),j}-P^{(1)}_{(d+i),j}|= & \left|\mathrm{tr}\left(\left(\frac{1}{d+1}(Uw_i)(Uw_i)^{\dagger}-\frac{1}{d+1}w_iw_i^{\dagger}\right)D_j\right)\right| \\
		\leq & \frac{1}{d+1}\lVert (Uw_i)(Uw_i)^{\dagger}-w_i w_i^{\dagger}\rVert_{\text{op}}\mathrm{tr}(D_j) \\
		\leq & {62d^{-1/2}\varepsilon^{1/8}}.
	\end{align}
	Meanwhile, for any $i\in[d]$ it holds that
	\begin{align*}
		\mathrm{tr}\left(\frac{1}{d+1}(Uw_1)(Uw_1)^{\dagger}\hat{C}_i\right)
		= & \frac{d}{(d+1)^2}|u_i^{\dagger} \tilde{w}|^2 \\
		= & \frac{1}{(d+1)^2}.
	\end{align*}
	In summary we have
	\begin{align}
		D(P^{(2)},P)\leq & D(P^{(2)},P^{(1)})+D(P^{(1)},P') + D(P',P) \\
		\leq & {62d^{-1/2}\varepsilon^{1/8}}+{36\varepsilon^{1/4}} + \varepsilon \\
		\leq & {64d^{-1/2}\varepsilon^{1/8}}.
	\end{align}
	and {that the operators that generate $P^{(2)}$ satisfy the desired five conditions.}
\end{proof}

{Based on the above lemmas, we now adjust the measurement operators $A_i$ and $B_j$ slightly such that they have better mathematical properties, and furthermore, the correlation they generate keeps almost unchanged. }

\begin{lemm}
	\label{lemm:robust-rectify-config}
	Suppose $(\rho,\{A_i\},\{B_j\})$ is a configuration that generates a $2d\times 2d$ correlation $P'_{ij}=\mathrm{tr}(\rho(A_i\otimes B_j))$ with $D(P,P')=\varepsilon$,
	where $P$ is given in Eq.\eqref{correlation:maximal},
	$\varepsilon\leq {(18)^{-9}(d+1)^{-16}}$,
	$\rho$ is the quantum state that Alice and Bob share,
	and $\{A_i\}$ and $\{B_j\}$ are the operators of the POVMs that they perform respectively.
	Then there exist measurement operators $\tilde{A}_i$ and $\tilde{B}_j$
	such that:
	\begin{enumerate}
		\item
			$\forall i\in[2d], \lVert \tilde{A}_i-A_i\rVert_{\text{op}}\leq {128(d+1)\varepsilon^{1/8}},
			\lVert \tilde{B}_i-B_i\rVert_{\text{op}}\leq {128(d+1)\varepsilon^{1/8}}$.
		\item
			$\forall i\in[2d], \mathrm{rank}(\tilde{A}_i)=\mathrm{rank}(\tilde{B}_i)=1$;
		\item
			{$\sum_{i=1}^d\tilde{A}_{i}=\sum_{i=1}^d\tilde{B}_{i}=\frac{d}{d+1}I_d$;}
            %The operators $\tilde{A}_{i}$($i\in[d]$) is a orthogonal resolution of $\frac{d}{d+1}I_d$;
			%the operators $\tilde{B}_{j}$($j\in[d]$) is a orthogonal resolution of $\frac{d}{d+1}I_d$;
		\item
            {$\sum_{i=1}^d\tilde{A}_{d+i}=\sum_{i=1}^d\tilde{B}_{d+i}=\frac{1}{d+1}I_d$;}
			%The operators $\tilde{A}_{d+i}$($i\in[d]$) is a orthogonal resolution of $\frac{1}{d+1}I_d$;
			%the operators $\tilde{B}_{d+j}$($j\in[d]$) is a orthogonal resolution of $\frac{1}{d+1}I_d$;
		\item
			$\forall i\in[d], \mathrm{tr}(\tilde{A}_i \tilde{A}_{d+1})=\mathrm{tr}(\tilde{B}_i \tilde{B}_{d+1})=\frac{1}{(d+1)^2}$;
		\item
			The configuration $(\rho,\{\tilde{A}_i\},\{\tilde{B}_j\})$ produces a correlation $P''_{ij}=\mathrm{tr}(\rho(\tilde{A}_i\otimes\tilde{B}_j))$
			with $D(P,{P''})\leq {256(d+1)\varepsilon^{1/8}}$.
	\end{enumerate}
\end{lemm}
\begin{proof}
	Consider Alice's side first; since $\{A_i\}$ and $\mathrm{tr}((I_A\otimes B_j)\rho)$ form a PSD factorization,
	by Lemma \ref{lemm:robust-rankone} and Lemma \ref{lemm:robust-rectify},
	there are operators $\tilde{A}_i$($i\in[2d]$) that satisfy all their relevant conditions in the statement of the lemma;
	in particular, we have that $\lVert \tilde{A}_i-A_i\rVert_{\text{op}}\leq {128(d+1)\varepsilon^{1/8}}$ and the correlation $P^{(A)}$ given by $\tilde{A}_i$ and $\mathrm{tr}_B((I_d\otimes B_j)\rho)$ is very close to $P'$,
	i.e., $D\left(P', P^{(A)}\right) \leq D\left(P, P^{(A)}\right) + D\left(P', P\right) \leq {128\varepsilon^{1/8}}$.
	Similarly the operators $\tilde{B}_i$ can be constructed within ${128(d+1)\varepsilon^{1/8}}$ operator norm distance from $B_i$.
	It remains to show that the resulting correlation $P''$ produced from $(\rho,\{\tilde{A}_i\},\{\tilde{B}_j\})$ is close to $P$.
	Indeed, it holds that
	% \begin{align*}
	% 	|P_{ij}-P''_{ij}|\leq & |\mathrm{tr}(\rho((A_i\otimes B_j)-(\tilde{A}_i\otimes \tilde{B}_j))| \\
	% 	\leq & \lVert \rho\rVert_1\lVert A_i\otimes B_j-\tilde{A}_i\otimes\tilde{B}_j\rVert_{\text{op}} \\
	% 	\leq & \lVert A_i\rVert_{\text{op}}\lVert B_j-\tilde{B}_j\rVert_{\text{op}}
	% 	+\lVert \tilde{B}_j\rVert_{\text{op}}\lVert A_i-\tilde{A}_i\rVert_{\text{op}} \\
	% 	\leq & {128}(d+1)^2\varepsilon^{1/4}.
	% \end{align*}
{
	\begin{align}
			 D\left(P'', P\right)=&\max_{i \in[2 d], j \in[2 d]}\left|\operatorname{tr}\left((\tilde{A_i} \otimes \tilde{B}_j) \rho\right)-\operatorname{tr}\left((A_i \otimes B_j) \rho\right)\right| \\
			 \leq&\max _{i \in[2 d], j \in[2 d]}\left(\left|\operatorname{tr}\left((\tilde{A}_i \otimes \tilde{B}_j) \rho\right)-\operatorname{tr}\left((\tilde{A_i} \otimes B_j) \rho\right)\right|+\left|\operatorname{tr}\left((\tilde{A_i} \otimes B_j) \rho\right)-\operatorname{tr}\left((A_i \otimes B_j) \rho\right)\right|\right) \\
			 \leq &\max _{i \in[2 d], j \in[2 d]}\left|\operatorname{tr}\left((\tilde{A}_i \otimes (\tilde{B}_j-B_j))\rho\right) \right|+{128\varepsilon^{1/8}} \\
			 \leq &\max _{i \in[2 d], j \in[2 d]}\left\|\tilde{A}_i \otimes (\tilde{B}_j-B_j)\right\|_{\text{op}}+{128\varepsilon^{1/8}} \\
			 \leq &\quad \max _{j \in[2d]}\left\|\tilde{B}_j-B_j\right\|_{\text{op}}+{128\varepsilon^{1/8}} \\
			 \leq & {256(d+1)\varepsilon^{1/8}}.
			\end{align}	
}
\end{proof}

{The lemma below shows that the desirable properties of the new measurement operators $\tilde{A}_i$ and $\tilde{B}_j$ allow us to obtain a lower bound on some distance between the target quantum state and a $d\times d$ maximally entangled pure state, which is achieved by looking into an optimization problem and its dual problem.}
\begin{lemm}
	\label{lemm:robust-sdp}
	Suppose $(\rho,\{A_i\},\{B_j\})$ forms a correlation $P'$ such that $D(P',P)\leq \varepsilon$
	and the measurement operators $A_i$ and $B_j$ satisfy the first four conditions specified in the
	statement of Lemma \ref{lemm:robust-rectify-config}.
	Then
	\begin{align}
		\mathrm{tr}(\tilde{\rho}\rho)\geq 1-22d^3\varepsilon,
	\end{align}
	where $\tilde{\rho}$ is a $d\times d$ maximally entangled pure state.
\end{lemm}
\begin{proof}
	Due to the conditions satisfied by the measurement operators $A_i$ and $B_j$,
	we may apply local unitaries such that for $i\in[d]$ it holds that $A_i=B_i=\frac{d}{d+1}e_ie_i^{\dagger}$
	and $A_{d+1}=B_{d+1}=\frac{1}{{d(d+1)}}\sum_{ij}e_i e_j^{\dagger}$;
	henceforth we assume that these operators are indeed given as such.
	Define $\tilde{\rho}=\frac{1}{d}\sum_{ij}(e_i\otimes e_i)(e_j\otimes e_j)^{\dagger}$;
	then $\tilde{\rho}$ is a maximally entangled pure state.
	Consider the optimization problem with variable $\hat{\rho}$,
	\begin{equation}\nonumber
	    \begin{aligned}
		\text{minimize } & \mathrm{tr}(\tilde{\rho}\hat{\rho}), \\
		\text{subject to } & \mathrm{tr}(\hat{\rho}(A_i\otimes B_j)=P'_{ij}) \ \ \forall i,j\in[d+1]\\
		& {\hat{\rho}}\geq 0
	\end{aligned}.
	\end{equation}
	Notice that the quantum state $\rho$ in the configuration is a feasible solution to the above optimization problem. One can verify that the dual problem to the above optimization problem is
	\begin{equation}\nonumber
	    \begin{aligned}
		\text{maximize } & \sum_{i,j=1}^{d+1}\mu_{ij}P'_{ij} \\
		\text{subject to } & \tilde{\rho}-\sum_{i,j=1}^{d+1}\mu_{ij}(A_i\otimes B_j)\geq 0
	\end{aligned},
	\end{equation}
	where the dual variables are $\mu_{ij}\in\mathbb{R}$.
	By straightforward calculation, one can show that a feasible solution $\mu$ to this dual problem can be constructed as below,
	\begin{align*}
		\mu_{ij}=
		\begin{cases}
			\frac{(d+1)^2}{d^2}, & i=j\in[d] \\
			-\frac{(d+1)^2}{d^2}\left[\left(\frac{3d-4}{d(d+1)^2}K\right)\frac{4(d-1)(d-2)K}{2dK-d(d+1)^2}+\frac{4(d-2)^2}{d(d+1)^2}K\right], & i,j\in[d], i\neq j \\
			-K, & (i=d+1\land j\neq d+1)\lor(i\neq d+1\land j=d+1) \\
			2dK, & i=j=d+1
		\end{cases},
	\end{align*}
	where $K\equiv (d+1)^2$. {According to the weak duality theorem, this gives a lower bound for the output of the primal problem, i.e.,
	\begin{align}
	\mathrm{tr}(\tilde{\rho}\hat{\rho})\geqslant & 1-\left(\frac{d+1}{d}\right)^2 \cdot d \varepsilon-\frac{(d+1)^2(d-1)}{d}\left[\frac{4(3 d-4)(d-1)(d-2)}{d^2}+\frac{4(d-2)^2}{d}\right] \varepsilon-2 d(d+1)^2\left(\frac{1}{d(d+1)^2}+\varepsilon\right) \\
	& +2 d(d+1)^2\left(\frac{1}{d(d+1)^2}-\varepsilon\right) \\
	\geqslant & 1-\varepsilon\left(d+2+\frac{1}{d}+4 d(d+1)^2+\frac{4(d+1)^2(d-1)^2(3 d-4)(d-2)}{d^3}+\frac{4(d+1)^2(d-2)^2(d-1)}{d^2}\right) \\
	\geqslant & 1-\varepsilon\left(d+3+4 d(d+1)^2+\frac{16(d+1)^2(d-1)^2}{d}\right) \\
	\geqslant & 1-\varepsilon\left(20 d^3-24 d^2+5 d+11\right) \\
	\geqslant & 1-22d^3\varepsilon.
	\end{align}
	}
\end{proof}

Putting all the above results together, we eventually obtain the following theorem.
\begin{theo}\label{thm:robustness}
If Alice and Bob generate a $2d\times 2d$ correlation $P'$ by locally measuring a $d\times d$ bipartite quantum state $\rho$, such that $|P_{xy}-P'_{xy}|\leq\epsilon$ for any $x,y\in[2d]$, where $P$ is the correlation in Eq.\eqref{correlation:maximal} and $\epsilon\leq {(18)^{-9}(d+1)^{-16}}$, then the fidelity between $\rho$ and a maximally entangled pure state will be at least ${1-5632(d+1)^4\varepsilon^{1/8}}$.
\end{theo}

{We stress that in the proof of the above theorem, in addition to the quantum dimension $d$, all the mathematical structures are obtained based on the observed correlation $P'$ only, which means that the quantum state certification is conducted without any other assumptions on the underlying quantum state $\rho$ (i.e., it is pure or not) and the quality of the quantum measurements $\{A_i\}$ and $\{B_j\}$. Therefore, the protocol is a semi-device-independent one, though no quantum nonlocality is involved.}

	\subsection{Proof for the Robustness under Correlation Scaling}

	We first reproduce the scaling technique with an alternative choice of parameters. %\red{(TODO: turn $\alpha_0$ into the second half)}
	\begin{theo}
		Suppose $\lambda_i>0$ for $i\in[d]$.
		For any real number {$\alpha_0\in(0,\sqrt{d}(d+1)\min_{i}\sqrt{\lambda_i})$},
		let
		\begin{align}
			\alpha_i= & \frac{d+1}{\sqrt{d}}\sqrt{\lambda_i}-\frac{1}{d}\alpha_0
		\end{align}
		for $i\in[d]$.
		Let
		\begin{align}
			\vec{\alpha}= & (\alpha_1,\cdots,\alpha_d,\alpha_0,\cdots,\alpha_0).
		\end{align}
		If the correlation
		\begin{align}
			{\hat{P}}= & \mathrm{diag}(\vec{\alpha})P\mathrm{diag}(\vec{\alpha})
		\end{align}
		is obtained upon locally measuring a $d\times d$ state $\rho$, then $\rho$ is bipartite pure entangled state
		with Schmidt coefficients $\sqrt{\lambda_1},\cdots,\sqrt{\lambda_d}$.
	\end{theo}

	\begin{proof}
		The exact same proof for the scaling technique for the $2d\times 2d$ construction applies,
		except that the spectrum of the normal form is now equal to that of
		\begin{align}
			\left(\frac{\sqrt{d}}{d+1}\sum_{i=1}^{d}\alpha_i e_i e_i^{\dagger}+\frac{1}{(d+1)\sqrt{d}}\alpha_0 I_d\right)^2,
		\end{align}
		which by the constructions of $\alpha_0$ and $\alpha_i$($i\in[d]$) is exactly $\lambda_1,\cdots,\lambda_d$.
	\end{proof}

	Suppose $(\rho,\{A_i\},\{B_j\})$ is a configuration that gives a correlation $P'$ close to the scaled $2d\times 2d$ correlation
	\begin{align*}
		\hat{P}\equiv \mathrm{diag}(\vec{\alpha})P\mathrm{diag}(\vec{\alpha});
	\end{align*}
	that is, $D(P',\hat{P})\leq\varepsilon_0$.
	Then
	\begin{align*}
		(\rho,\{\vec{\alpha}_1^{-1}A_1,\cdots,\vec{\alpha}_{2d}^{-1}A_{2d}\},\{\vec{\alpha}_1^{-1}B_1,\cdots,\vec{\alpha}_{2d}^{-1}B_{2d}\})
	\end{align*}
	is a configuration that realizes a correlation $P''$ such that {$D(P'',{P})\leq\max_{i\in[2d]}(\alpha_i)^{-2}\varepsilon_0\equiv\varepsilon_1$}.
	Notice that this configuration is not physical since the measurement operators on each site do not sum to $I_d$.
	Thus we define $S_A=\sum_{i=1}^{2d}\vec{\alpha}_{i}^{-1}A_i$ and $S_B=\sum_{j=1}^{2d}\vec{\alpha}_{j}^{-1}B_j$,
	and consequently
	\begin{align*}
		((S_A^{1/2}\otimes S_B^{1/2})\rho(S_A^{1/2}\otimes S_B^{1/2}),\{S_A^{-1/2}\vec{\alpha}_i^{-1}A_i S_A^{-1/2}\},\{S_B^{-1/2}\vec{\alpha}_j^{-1} B_j S_B^{-1/2}\})
	\end{align*}
	is a physical configuration that realizes $P''$.
	By the previous lemmas, there exists a state $\tilde{\rho}$ and measurement operators $\tilde{A}_i$ and $\tilde{B}_j$ such that:
	\begin{itemize}
		\item
			$(\tilde{\rho},\{\tilde{A}_i\},\{\tilde{B}_j\})$ realizes the original correlation $P$;
		\item
			For $i\in[d]$, it holds that
			\begin{align}
				\lVert \tilde{A}_i-S_A^{-1/2}\vec{\alpha}_i^{-1}A_i S_A^{-1/2}\rVert_{\text{op}}\leq & f(\varepsilon_1), \\
				\lVert \tilde{B}_j-S_B^{-1/2}\vec{\alpha}_j^{-1}B_j S_B^{-1/2}\rVert_{\text{op}}\leq & f(\varepsilon_1),
			\end{align}
			 {where $f(\varepsilon_1)$ can be determined by the lemmas in the previous subsection}.
		\item
			\begin{align}
				\left\lVert \sum_{i=1}^{d}\tilde{A}_{d+i}-S_A^{-1/2}\left(\sum_{i=1}^{d}\alpha_0^{-1}A_{d+i}\right)S_A^{-1/2}\right\rVert_{\text{op}}\leq & g(\varepsilon_1), \\
				\left\lVert \sum_{j=1}^{d}\tilde{B}_{d+j}-S_B^{-1/2}\left(\sum_{j=1}^{d}\alpha_0^{-1}B_{d+j}\right)S_B^{-1/2}\right\rVert_{\text{op}}\leq & g(\varepsilon_1),
			\end{align}
			 {where $g(\varepsilon_1)$ can be determined by the lemmas in the previous subsection}.
		\item
			\begin{align}
				\left\lVert\tilde{\rho}-(S_A^{1/2}\otimes S_B^{1/2})\rho(S_A^{1/2}\otimes S_B^{1/2})\right\rVert_{\text{op}}\leq h(\varepsilon_1),
			\end{align}
   {where $h(\varepsilon_1)$ can be determined by the lemmas in the previous subsection}.
	\end{itemize}
	Now define $\tilde{S}_A=\sum_{i}\vec{\alpha}_i\tilde{A}_i$ and $\tilde{S}_B=\sum_{j}\vec{\alpha}_j\tilde{B}_j$.
	Then the configuration
	\begin{align*}
		((\tilde{S}_A^{1/2}\otimes\tilde{S}_B^{1/2})\tilde{\rho}(\tilde{S}_A^{1/2}\otimes\tilde{S}_B^{1/2}),\{\tilde{S}_A^{-1/2}\vec{\alpha}_i\tilde{A}_i\tilde{S}_A^{-1/2}\},\{\tilde{S}_B^{-1/2}\vec{\alpha}_j\tilde{B}_j\tilde{S}_B^{-1/2}\})
	\end{align*}
	exactly realizes {the scaled correlation $\hat{P}$}, hence $\hat{\rho}\equiv(\tilde{S}_A^{1/2}\otimes\tilde{S}_B^{1/2})\tilde{\rho}(\tilde{S}_A^{1/2}\otimes\tilde{S}_B^{1/2})$
	is a pure state with Schmidt coefficients determined from the construction of the scaling factors.
	We now estimate $\lVert\rho-\hat{\rho}\rVert_{\text{op}}$:
    \begin{align*}
    	& \lVert\rho-\hat{\rho}\rVert_{\text{op}} \\
    	= & \left\lVert\rho-{(\tilde{S}_A^{1/2}\otimes\tilde{S}_B^{1/2})\tilde{\rho}(\tilde{S}_A^{1/2}\otimes\tilde{S}_B^{1/2})}\right\rVert_{\text{op}} \\
    	\leq & \left\lVert\rho-(\tilde{S}_A^{1/2}\otimes\tilde{S}_B^{1/2})(S_A^{1/2}\otimes S_B^{1/2})\rho(S_A^{1/2}\otimes S_B^{1/2})(\tilde{S}_A^{1/2}\otimes\tilde{S}_B^{1/2})\right\rVert_{\text{op}} \\
    	& +\left\lVert(\tilde{S}_A^{1/2}\otimes\tilde{S}_B^{1/2})(S_A^{1/2}\otimes S_B^{1/2})\rho(S_A^{1/2}\otimes S_B^{1/2})(\tilde{S}_A^{1/2}\otimes\tilde{S}_B^{1/2}) -(\tilde{S}_A^{1/2}\otimes\tilde{S}_B^{1/2})\tilde{\rho}(\tilde{S}_A^{1/2}\otimes\tilde{S}_B^{1/2})\right\rVert_{\text{op}} \\
    	\leq & \left\lVert\rho-(\tilde{S}_A^{1/2}\otimes\tilde{S}_B^{1/2})(S_A^{1/2}\otimes S_B^{1/2})\rho\right\rVert_{\text{op}} \\
    	& +\left\lVert(\tilde{S}_A^{1/2}\otimes\tilde{S}_B^{1/2})(S_A^{1/2}\otimes S_B^{1/2})\rho-(\tilde{S}_A^{1/2}\otimes\tilde{S}_B^{1/2})(S_A^{1/2}\otimes S_B^{1/2})\rho(S_A^{1/2}\otimes S_B^{1/2})(\tilde{S}_A^{1/2}\otimes\tilde{S}_B^{1/2})\right\rVert_{\text{op}} \\
    	& +\lVert \tilde{S}_A\rVert_{\text{op}}\lVert\tilde{S}_B\rVert_{\text{op}}h(\varepsilon_1) \\
    	\leq & \lVert I-(\tilde{S}_A^{1/2}\otimes \tilde{S}_{B}^{1/2})(S_A^{1/2}\otimes S_B^{1/2})\rVert_{\text{op}} \\
    	& +\lVert(\tilde{S}_A^{1/2}\otimes \tilde{S}_{B}^{1/2})(S_A^{1/2}\otimes S_B^{1/2})\rVert_{\text{op}}
    	\lVert I-(S_A^{1/2}\otimes S_B^{1/2})(\tilde{S}_A^{1/2}\otimes \tilde{S}_{B}^{1/2})\rVert_{\text{op}} \\
    	& +\lVert \tilde{S}_A\rVert_{\text{op}}\lVert\tilde{S}_B\rVert_{\text{op}}h(\varepsilon_1) \\
    	\leq & \lVert S_A^{-1/2}\otimes S_B^{-1/2}-\tilde{S}_A^{1/2}\otimes \tilde{S}_B^{1/2}\rVert_{\text{op}}
    	\left(\lVert S_A\rVert_{\text{op}}^{1/2}\lVert S_B\rVert_{\text{op}}^{1/2}
    	+\lVert \tilde{S}_A\rVert_{\text{op}}^{1/2}\lVert \tilde{S}_B\rVert_{\text{op}}^{1/2}
    	\lVert S_A\rVert_{\text{op}}\lVert S_B\rVert_{\text{op}}\right) \\
    	& +\lVert \tilde{S}_A\rVert_{\text{op}}\lVert\tilde{S}_B\rVert_{\text{op}}h(\varepsilon_1) \\
    	\leq & {\left({\lVert S_A^{-1}\rVert_{\text{op}}^{1/2}}\lVert S_B^{-1}-\tilde{S}_B\rVert_{\text{op}}^{1/2}+\lVert \tilde{S}_B\rVert_{\text{op}}^{1/2}\lVert S_A^{-1}-\tilde{S}_A\rVert_{\text{op}}^{1/2}\right)
    	\left(\lVert S_A\rVert_{\text{op}}^{1/2}\lVert S_B\rVert_{\text{op}}^{1/2}
    	+\lVert \tilde{S}_A\rVert_{\text{op}}^{1/2}\lVert \tilde{S}_B\rVert_{\text{op}}^{1/2}
    	\lVert S_A\rVert_{\text{op}}\lVert S_B\rVert_{\text{op}}\right)} \\
    	& {+\lVert \tilde{S}_A\rVert_{\text{op}}\lVert\tilde{S}_B\rVert_{\text{op}}h(\varepsilon_1)}.
    \end{align*}
	{To estimate the above upper bound for $\lVert\rho-\hat{\rho}\rVert_{\text{op}}$, note that}
	\begin{align*}
		S_A= & \sum_{i}\vec{\alpha}_i^{-1}A_i\leq \sum_{i}\left(\max_{i'}\vec{\alpha}_{i'}^{-1}\right)A_i=\left(\max_{i'}\vec{\alpha}_{i'}^{-1}\right)I_d,
	\end{align*}
	hence
	\begin{align}
		{\lVert S_A\rVert_{\text{op}}\leq\max_{i'}\vec{\alpha}_{i'}^{-1}}.
	\end{align}
	Similarly, it holds that
	\begin{align}
		{\lVert S_A^{-1}\rVert_{\text{op}}\leq\max_{i'}\vec{\alpha}_{{i'}}},
		\lVert S_B\rVert_{\text{op}}\leq\max_{i'}\vec{\alpha}_{i'}^{-1},
		\lVert \tilde{S}_A\rVert_{\text{op}}\leq\max_{i'}\vec{\alpha}_{{i'}},
		\lVert \tilde{S}_B\rVert_{\text{op}}\leq\max_{i'}\vec{\alpha}_{{i'}}.
	\end{align}
	To estimate $\lVert S_A^{-1}-\tilde{S}_A\rVert_{\text{op}}$, note that
	\begin{align*}
		& \left\lVert S_A^{-1}-\tilde{S}_A\right\rVert_{\text{op}} \\
		= & \left\lVert S_A^{-1}-\sum_{i=1}^{d}\alpha_i\tilde{A}_i-\alpha_0\sum_{i=1}^{d}\tilde{A}_{d+i}\right\rVert_{\text{op}} \\
		= & \bigg\lVert S_A^{-1}-\sum_{i=1}^{d}\alpha_i(\alpha_i^{-1}S_A^{-1/2}A_i S_A^{-1/2}+(\tilde{A}_i-\alpha_i^{-1}S_A^{-1/2}A_i S_A^{-1/2})) \\
		& -\alpha_0\left(\alpha_0^{-1}S_A^{-1/2}\sum_{i=1}^{d}A_{d+i} S_A^{-1/2}+\sum_{i=1}^{d}\tilde{A}_{d+i}-\alpha_0^{-1}S_A^{-1/2}\sum_{i=1}^{d}A_{d+i} S_A^{-1/2}\right)\bigg\rVert_{\text{op}} \\
		\leq & \sum_{i=1}^{d}\alpha_i\lVert \tilde{A}_i-\alpha_i^{-1}S_A^{-1/2}A_i S_A^{-1/2}\rVert_{\text{op}}+\alpha_0 \left\lVert\sum_{i=1}^{d}\tilde{A}_{d+i}-\alpha_0^{-1}S_A^{-1/2}\left(\sum_{i=1}^{d}A_{d+i}\right)S_A^{-1/2}\right\rVert_{\text{op}} \\
		\leq & \left(\sum_{i=1}^{d}\alpha_i\right)f(\varepsilon_1)+\alpha_0 g(\varepsilon_1).
	\end{align*}
	Similarly,
	\begin{align*}
		\lVert S_B^{-1}-\tilde{S}_B\rVert_{\text{op}}\leq \left(\sum_{i=1}^{d}\alpha_i\right)f(\varepsilon_1)+\alpha_0 g(\varepsilon_1).
	\end{align*}
	Therefore, by denoting
	\begin{align*}
		\alpha_{+}= & \max_{i}\vec{\alpha}_i, \\
		\alpha_{-}= & \min_{i}\vec{\alpha}_i, \\
	\end{align*}
	we obtain that
	\begin{align}
		{\lVert \rho-\hat{\rho}\rVert_{\text{op}}\leq 2\alpha_{+}^{1/2}(\alpha_{-}^{{-1}}+\alpha_{+}\alpha_{-}^{{-2}})\left(\left(\sum_{i=1}^{d}\alpha_i\right)f(\varepsilon_1)+\alpha_0 g(\varepsilon_1)\right)^{1/2}+\alpha_{+}^2 h(\varepsilon_1).}
	\end{align}

	Notice that the robustness highly depends on the scaling factors $\alpha_i$($i\in[d]$) and $\alpha_0$; if one of them is extremely close to $0$ or extremely large, then the robustness requirement becomes extremely stringent accordingly.

\subsection{Numerical evidence for the robustness of our protocols}
In this subsection, we provide numerical evidence showing the robustness of our approach. We would like to point out that compared with the theoretical result given in Theorem \ref{thm:robustness}, the robustness that the numerical evidence suggests is much stronger.

The numerical approach we adopt works as follows. Suppose Alice and Bob measure a $d\times d$ quantum state $\rho$ with the POVMs $\{A_i\}$ and $\{B_j\}$ respectively, and obtain a correlation $P'$ that is close to $P$, where $P$ is given in Eq.\eqref{correlation:maximal}. We hope to estimate the fidelity between $\rho$ and $d\times d$ maximally entangled states. For this, we consider the following optimization problem
\begin{equation}\label{eq:optimization_origin}
    \begin{aligned}
        \min_{\substack{\rho, A_i, B_j \geq 0 \\ i=1, \ldots, 2d \\ j=1, \ldots, 2d}}  &\sum_{i=1}^{2d} \sum_{j=1}^{2d}\left(P_{ij}-\mathrm{tr}(\rho A_i \otimes B_j) \right)^2 \\
    \text{subject } \text{to  }&   \sum_{i=1}^{2d} A_i=\sum_{j=1}^{2d} B_j = I_d\\
	&\mathrm{tr}(\rho) = 1
    \end{aligned},
\end{equation}
{where the variables are the quantum state $\rho$ and two sets of POVM operators $\{A_i\}$ and $\{B_j\}$.}
% \begin{align}\label{eq:optimization_origin}
%     \min_{\substack{\rho, A_i, B_j \geq 0 \\ i=1, \ldots, 2d \\ j=1, \ldots, 2d}}  &\sum_{i=1}^{2d} \sum_{j=1}^{2d}\left(P_{ij}-\mathrm{tr}(\rho A_i \otimes B_j) \right)^2 \\
%     \text{subject } \text{to  }&   \sum_{i=1}^{2d} A_i=\sum_{j=1}^{2d} B_j = I_d\nonumber\\
% 	&\mathrm{tr}(\rho) = 1.\nonumber
%     \end{align}
	%
This optimization problem is non-convex and hard to solve directly, and we design a see-saw algorithm to address it, where we transfer the problem in Eq.\eqref{eq:optimization_origin} into several convex optimization problems that are relatively easier to handle by fixing parts of the variables by turns. The pseudo-code for this algorithm is detailed in Algorithm \ref{alg_seesaw}.
% For example, to optimize $\rho$ we fix $\{A_i\}$ and $\{B_j\}$, similarly to optimize $\{A_i\}$ we fix $\rho$ and $\{B_j\}$.

%To find a potential seed state capable of generating the noisy
%correlation $P'$ that is close to $P$, where $P$ is given in Eq.\eqref{correlation:maximal}, we first formulate it as an

\begin{algorithm}
\renewcommand{\algorithmicrequire}{\textbf{Input:}}
\renewcommand{\algorithmicensure}{\textbf{Output:}}	
\caption{{Finding $\rho$ by alternating optimizations}}
	\hspace{-80mm}\textbf{INPUT:} $P \in \mathbb{R}_{+}^{2d \times 2d}$, $\{A_i\}$, $\{B_j\}$, and the stopping precision $\epsilon_2$.\\
	\begin{algorithmic}[1]
		\WHILE{$\sum_{i=1}^{2d} \sum_{j=1}^{2d}\left(P_{ij}-\mathrm{tr}(\rho A_i \otimes B_j) \right)^2 > \epsilon_2$}
		    \STATE  $\qquad \rho \leftarrow$ solve optimization problem (\ref{eq:optimization_origin}) with fixed $\{A_i\}$ and $\{B_j\}$
			\STATE $\qquad\{A_i\} \leftarrow$ solve optimization problem (\ref{eq:optimization_origin}) with fixed $\rho$ and $\{B_j\}$
			\STATE $\qquad\{B_j\} \leftarrow$ solve optimization problem (\ref{eq:optimization_origin}) with fixed $\rho$ and $\{A_i\}$
		\ENDWHILE
	\end{algorithmic}
        \hspace{-150mm}\textbf{OUTPUT:} $\rho$.
	\label{alg_seesaw}
\end{algorithm}
Here we choose $P$ as in Eq.\eqref{correlation:maximal}, and each matrix in the initial inputs $\{A_i\}$ and $\{B_j\}$ is given by
\begin{equation}\label{eq:initial-points}
    \sum_{k=1}^d a^k a^{k^T},
\end{equation}
    where for each $k\in[d]$, $a^k$ is an $n$-dimensional vector whose entries are initialized according to the normal distribution $\mathcal{N}(0,1)$. For a given stopping precision $\epsilon_2$, we solve the convex optimization problems in Algorithm \ref{alg_seesaw} by CVXPY \cite{diamond2016cvxpy}. After Algorithm \ref{alg_seesaw} returns a quantum state $\rho$, we calculate the largest fidelity between it and a maximally entangled quantum state:
\begin{align*}
    \max_{U_A,U_B} \mathrm{tr}((U_A\otimes U_B) \rho(U_A^\dagger\otimes U_B^\dagger)\ket{\psi}\bra{\psi}),
    \end{align*}
where $U_A$ and $U_B$ are local unitary matrices that are performed on Alice and Bob's sides respectively. In fact, this optimization problem is also hard to solve exactly, so we instead compute a lower bound for its solution. Specifically, after obtaining $\rho$, we first compute its spectral decomposition, that is, $\rho = \sum_{i=1}^{d^2} x_i\left|i_{A B}\right\rangle\left\langle i_{A B}\right|$, where $x_1 \geqslant x_2 \geq \cdots \geqslant x_{d^2}$. It can be seen that
$$\lVert \rho-\ket{1_{A B}}\bra{1_{A B}}\rVert_{\text{op}} \leq \max \left\{\left(1-x_1\right), x_2\right\} {= 1-x_1}. $$
Then it holds that
\begin{align*}
	& \operatorname{tr}\left(U_A \otimes U_B \rho U_A^{\dagger} \otimes U_B^{\dagger}\ket{\psi} \bra{\psi}\right) \\
	& \geqslant \operatorname{tr}\left(U_A \otimes U_B\left(\ket{1_{A B}}\bra{1_{A B}} -{(1-x_1)}I_d\right)U_A^{\dagger} \otimes U_B^{\dagger}\ket{\psi} \bra{\psi}\right)\\
	& = \operatorname{tr}\left(U_A \otimes U_B\ket{1_{A B}}\bra{ 1_{A B}}U_A^{\dagger} \otimes U_B^{\dagger}\ket{\psi} \bra{\psi}\right){-1+x_1}.
	\end{align*}
Suppose the Schmidt coefficients of $\ket{1_{A B}}$ are $\sqrt{\lambda_1},\cdots,\sqrt{\lambda_d}$, then we have that
\begin{align}\label{lower bound:fidelity}
    \max_{U_A,U_B} \mathrm{tr}((U_A\otimes U_B) \rho (U_A^\dagger\otimes U_B^\dagger)\ket{\psi}\bra{\psi})\geq\left(\sum_i \sqrt{\lambda_i}\right)^2 / d-{1+x_1}.
    \end{align}
Here we choose $\epsilon_2\in[10^{-6},10^{-4}]$ at intervals of size $10^{-6}$, and on each value of $\epsilon_2$ we run Algorithm \ref{alg_seesaw} for 2,000 times. Note that due to the way we pick up $\{A_i\}$ and $\{B_j\}$ in Eq.(\ref{eq:initial-points}), each running of our algorithm has different inputs. Based on the output $\rho$ of Algorithm \ref{alg_seesaw}, we lower bound its largest fidelity with a maximally entangled state with Eq.(\ref{lower bound:fidelity}). For $d={3}$, the results are depicted in the figure below.
\begin{figure}[!h]
	\centering
	\includegraphics[width=0.5\textwidth]{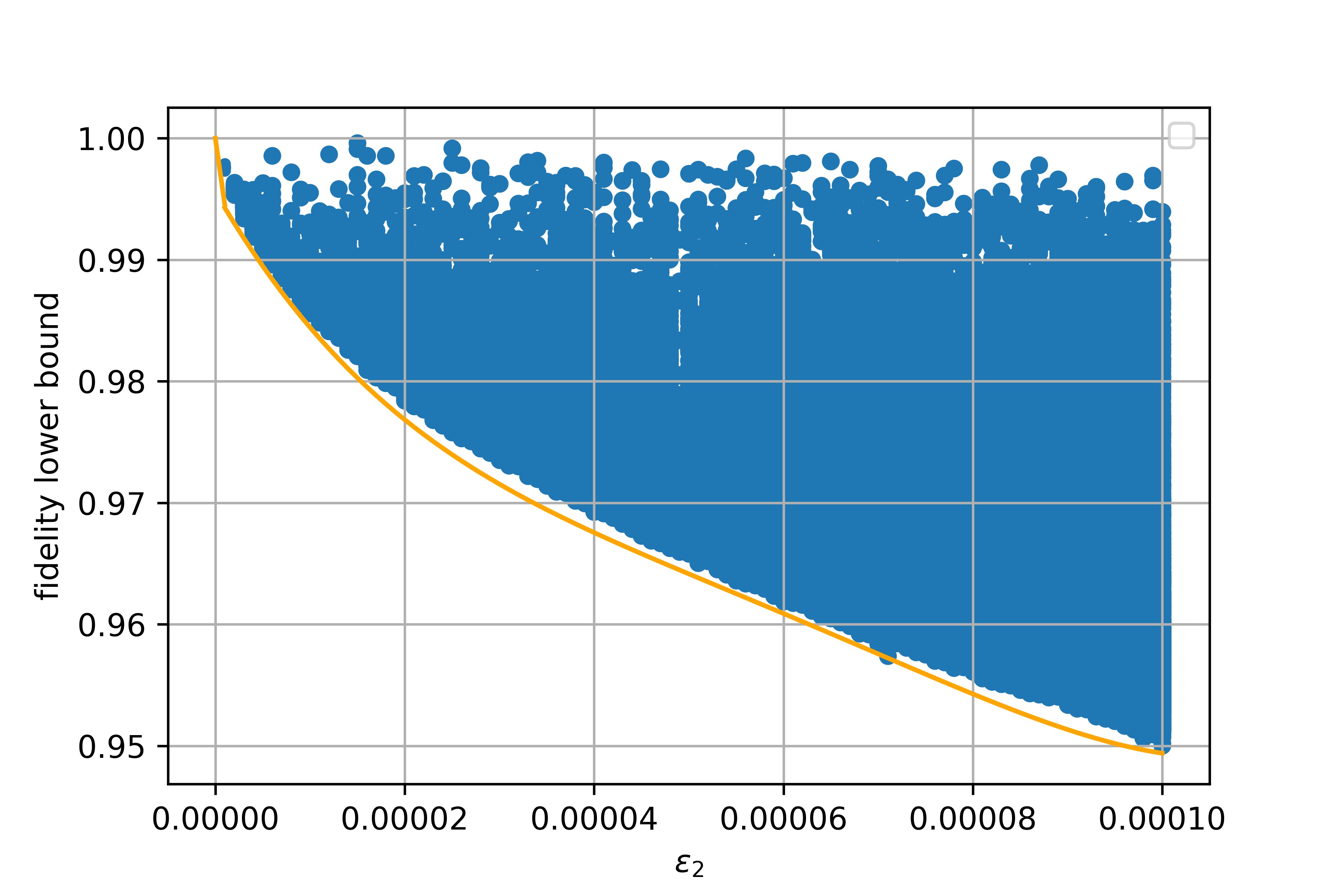}
	\caption{The lower bound for the largest fidelity between the output quantum states of Algorithm \ref{alg_seesaw} and a maximally entangled state, where $d={3}$ and the stopping precision ranges from $10^{-6}$ to $10^{-4}$.}
	\label{fig:seesaw_2}
\end{figure}

Note that when $D(P,P')\approx {1.67}\times 10^{-3}$, the corresponding $\epsilon_2$ is roughly $10^{-4}$, and Fig.\ref{fig:seesaw_2} shows that the largest fidelity between any quantum state $\rho$ that can produce $P'$ and a maximally entangled state is lower bounded by approximately {0.95}. Considering that the average value of $P$'s entries is {around $2.78\times 10^{-2}$}, this conclusion suggests that our protocol has a very decent robustness.

\section{Certifying bipartite pure states with $(d+1)\times(d+1)$ correlations}

    We first show that the maximally entangled pure state can be certified with $(d+1)\times(d+1)$ correlations. Consider the correlation $R$ of size $(d+1)\times(d+1)$ given by
	\begin{align}\label{correlation:maximal2}
		R(i,j)= &
		\begin{cases}
			{\frac{d}{(d+1)^2}}, & i=j \\
			{\frac{1}{d(d+1)^2}}, & i\neq j
		\end{cases}.
	\end{align}
Then we have the following result.
\begin{theo}\label{thm:maximal2}
If Alice and Bob generate the correlation $R$ in Eq.\eqref{correlation:maximal2} by locally measuring a $d\times d$ bipartite quantum state $\rho$, then $\rho$ must be a maximally entangled pure state.
\end{theo}

	Let $\{C_i\}$ and $\{D_j\}$ be a canonical form of PSD decomposition for $R$;
	that is, $\sum_i C_i=\sum_jD_j=\Lambda$ where $\Lambda$ is a diagonal matrix.
	By summing up all the entries of $R$, we obtain the following proposition:
	\begin{prop}
		$\mathrm{tr}({\Lambda^2})={1}$.
	\end{prop}

	The following lemma utilizes the dimensionality constraint.
	\begin{lemm}
		Let $X$ be a nonzero $d\times d$ positive semi-definite matrix.
		Then
		\begin{align}
			\sqrt{\mathrm{tr}(X^2)}\leq \mathrm{tr}(X)\leq \sqrt{d\cdot\mathrm{tr}(X^2)},
		\end{align}
		where the left equality is attained only when $X$ has rank $1$ and the right equality is attained only when $X$ is a scalar multiple of the identity.
	\end{lemm}
	\begin{proof}
		The inequality on the left can be shown by representing the quantities in terms of eigenvalues.
		The inequality on the right can be shown via Cauchy-Schwarz inequality.
	\end{proof}

	We now have
	\begin{align*}
		{2\sqrt{d}}= & 2\sum_i\sqrt{\mathrm{tr}(C_i D_i)} \\
		\leq & 2\sum_i\left(\mathrm{tr}(C_i^2)\mathrm{tr}(D_i^2)\right)^{1/4} \\
		\leq & \sum_i\sqrt{\mathrm{tr}(C_i^2)}+\sum_i\sqrt{\mathrm{tr}(D_i^2)} \\
		\leq & \sum_i\mathrm{tr}(C_i)+\sum_i\mathrm{tr}(D_i) \\
		= & 2\mathrm{tr}({\Lambda}) \\
		\leq & 2\sqrt{d\mathrm{tr}({\Lambda^2})} \\
		= & {2\sqrt{d}},
	\end{align*}
	hence all the involved inequalities are tight.
	This implies that ${\Lambda}$ is a scalar multiple of the identity. Furthermore, all matrices $C_i$ and $D_i$ are of rank $1$, {as for any $i$ it holds that $\sqrt{\mathrm{tr}(C_i^2)}=\mathrm{tr}(C_i)$ and $\sqrt{\mathrm{tr}(D_i^2)}=\mathrm{tr}(D_i)$}.
	Additionally, from
	\begin{align*}
		\mathrm{tr}(C_i D_i)=\sqrt{\mathrm{tr}(C_i^2)\mathrm{tr}(D_i^2)}
	\end{align*}
	we conclude that $C_i$ is a scalar multiple of $D_i$.
	Thus, by $\mathrm{tr}(C_i D_i)={\frac{d}{(d+1)^2}}$ we have $\mathrm{tr}(C_i)+\mathrm{tr}(D_i)\geq {2\frac{\sqrt{d}}{d+1}}$ with equality attained
	only when $\mathrm{tr}(C_i)=\mathrm{tr}(D_i)={\frac{\sqrt{d}}{d+1}}$.
	By $\sum_i\mathrm{tr}(C_i)+\sum_i\mathrm{tr}(D_i)=2\sqrt{d}$ we finally have $\mathrm{tr}(C_i)=\mathrm{tr}(D_i)={\frac{\sqrt{d}}{d+1}}$,
	which implies that $C_i=D_i$ for all $i\in[d+1]$.

    {Suppose Alice and Bob generate $R$ by measuring $\rho$ with local quantum measurements $\{A_i\}$ and $\{B_j\}$. We now investigate the measurement operators $A_i$ and $B_j$}.

	\begin{lemm}
		All measurement operators $A_i$ and $B_j$ have rank 1.
	\end{lemm}
	\begin{proof}
		We prove the lemma for $A_i$, {and the case for $B_j$ is similar}.
		Notice that $\{A_i\}$ and {$\{\mathrm{tr}_B((I_A\otimes B_j)\rho)\}$} form a PSD factorization for $R$.
		This PSD factorization is equivalent to a canonical form of PSD factorization for $R$, where the ranks of the respective operators are equal. Thus by the previous discussion we thus have $\mathrm{rank}(A_i)=1$ for all $i\in[d+1]$.
	\end{proof}

	\begin{lemm}
		$\mathrm{tr}(A_i)=\frac{d}{d+1}$.
	\end{lemm}
	\begin{proof}
		We have proved that for an arbitrary canonical form of PSD decomposition for $R$, ${\Lambda}$ is a scalar multiple of the identity, then we have $\mathrm{tr}_B(\rho)={I_d/d}$.
		Therefore, by summing up the corresponding row of $R$ {and calculating the marginal probability of Alice}, we have that
		\begin{align*}
			\mathrm{tr}(A_i)= & d\cdot\mathrm{tr}(A_i\mathrm{tr}_B(\rho)) \\
			= & d\left(\frac{d}{(d+1)^2}+\frac{1}{d(d+1)^2}\cdot d\right) \\
			= & \frac{d}{d+1}.
		\end{align*}
	\end{proof}

	Let $\tilde{B}_i\equiv \mathrm{tr}_B((I_A\otimes B_i)\rho)$. We have the following conclusion.
	\begin{lemm}
		\label{lemm:mirror}
		$\tilde{B}_i=\frac{1}{d} A_i$
		for all $i\in[d+1]$.
	\end{lemm}
	\begin{proof}
		According to Lemma \ref{correlation:scsd} and the above discussion, for any PSD factorization $\{C_i\}$ and $\{D_i\}$ for $R$ of size $d\times d$, we have $S_C^{1/2}S_D S_C^{1/2}={\frac{1}{d}} I$,
		where $S_C=\sum_i C_i$ and $S_D=\sum_i D_i$.
		Since $\sum_i A_i=I$,
		we must have $\sum_i \tilde{B}_i={\frac{1}{d}} I$.
		Therefore, $\{{\frac{1}{\sqrt{d}}}A_i\}$ and $\{\sqrt{d}\tilde{B}_i\}$ is a canonical form of PSD factorization for $R$.
		From the above discussion we then have ${\frac{1}{\sqrt{d}}}A_i={\sqrt{d}}\tilde{B}_i$.
	\end{proof}

	We single out the operator $A_{d+1}$ to obtain
	\begin{align}
		\sum_{i=1}^{d}(I-A_{d+1})^{-1/2}A_i(I-A_{d+1})^{-1/2}=I.
	\end{align}
	Since this is a resolution of the identity involving $d$ rank one positive operators,
	these positive operators must be orthogonal projections on some orthonormal basis.
	Thus, upon suitable local unitary, we may assume that
	\begin{align}
		(I-A_{d+1})^{-1/2}A_i(I-A_{d+1})^{-1/2}=e_i e_i^{\dagger},
	\end{align}
	where $\{e_i\}$ is an orthonormal basis for Alice's subsystem.
	Taking trace then gives
	\begin{align*}
		\mathrm{tr}((I-A_{d+1})^{-1}A_i)=1.
	\end{align*}
	Notice that $(I-A_{d+1})^{-1}=I+(d+1)A_{d+1}$, {which can be seen from $\mathrm{tr}(A_{d+})=\frac{d}{d+1}$ and $\mathrm{rank}(A_{d+1})=1$, and thus} $\mathrm{tr}(A_{d+1}A_i)=\frac{1}{(d+1)^2}$.
	We now attempt to obtain the amplitudes of $A_{d+1}$ on the basis $\{e_i\}$:
	\begin{align*}
		\mathrm{tr}(A_{d+1}e_i e_i^{\dagger})= & \mathrm{tr}(A_{d+1}(I-A_{d+1})^{-1/2}A_i(I-A_{d+1})^{-1/2}) \\
		= & \mathrm{tr}((I+(d+1)A_{d+1})A_{d+1}A_i) \\
		= & \mathrm{tr}((d+1)A_{d+1}A_{i}) \\
		= & \frac{1}{d+1},
	\end{align*}
	where the second equality is due to $A_{d+1}$ commuting with $(I-A_{d+1})^{-1/2}$. Therefore, $A_{d+1}=\frac{1}{d+1}vv^{\dagger}$
	where $v=\sum_{i=1}^{d}\exp(\mathbf{i}\theta_i) e_i$ and $\theta_i$ are some real numbers.
	Consider the unitary $U=\sum_{i=1}^{d}\exp(-\mathbf{i}\theta_i) e_i e_i^{\dagger}$.
	Then
	\begin{align*}
		UA_{d+1}U^{\dagger}= & \frac{1}{d+1}w w^{\dagger}, \\
		UA_{i}U^{\dagger}= & U(I-A_{d+1})^{1/2}e_i e_i^{\dagger} (I-A_{d+1})^{1/2} U^{\dagger} \\
		= & \left[U(I-A_{d+1})^{1/2}U^{\dagger}\right](Ue_i e_i^{\dagger}U^{\dagger})\left[U(I-A_{d+1})^{1/2}U^{\dagger}\right] \\
		= & {\left[U\left(I+\frac{\sqrt{d+1}-(d+1)}{d}A_{d+1}\right)U^{\dagger}\right]e_i e_i^{\dagger}\left[U\left(I+\frac{\sqrt{d+1}-(d+1)}{d}A_{d+1}\right)U^{\dagger}\right]} \\
		= & \left(e_i-\frac{d+1-\sqrt{d+1}}{d(d+1)}w\right)\left(e_i-\frac{d+1-\sqrt{d+1}}{d(d+1)}w\right)^{\dagger} \\
		= & \left(e_i-\gamma_d w\right)\left(e_i-\gamma_d w\right)^{\dagger},
	\end{align*}
	where $i\in[d]$, $w=\sum_{i=1}^{d}e_i$ and $\gamma_d\equiv \frac{d+1-\sqrt{d+1}}{d(d+1)}$.

	We now perform the process prescribed above to Bob's system so that the measurement operators $B_i$ can be specified as well.
	After all the local unitaries are performed to the measurement operators and the state $\rho$, {the measurement operators can be written as}:
	\begin{align}
		\label{correlation:dp1measspec1}
		A_i= & \left(\ket{i}_A-\gamma_d\ket{\psi}_A\right)\left(\bra{i}_A-\gamma_d \bra{\psi}_A\right) & \forall i\in[d], \\
		\label{correlation:dp1measspec2}
		A_{d+1}= & \frac{1}{d+1}\ket{\psi}_A \bra{\psi}_A, & \\
		\label{correlation:dp1measspec3}
		B_i= & \left(\ket{i}_B-\gamma_d\ket{\psi}_B\right)\left(\bra{i}_B-\gamma_d \bra{\psi}_B\right) & \forall i\in[d], \\
		\label{correlation:dp1measspec4}
		B_{d+1}= & \frac{1}{d+1}\ket{\psi}_B \bra{\psi}_B, &
	\end{align}
	where
	\begin{align*}
		\ket{\psi}_A= & \sum_{i=1}^{d}\ket{i}_A, \\
		\ket{\psi}_B= & \sum_{i=1}^{d}\ket{i}_B,
	\end{align*}
        {and we record $e_i$ as $\ket{i}$ from now on}.

	\begin{lemm}
		\label{lemm:singularity}
		$(\bra{i}_A\otimes I_B)\rho(\ket{i}_A\otimes I_B)=\frac{1}{d}\ket{i}_B\bra{i}_B$.
	\end{lemm}
	\begin{proof}
		Let $X={d}(\bra{i}_A\otimes I_B)\rho(\ket{i}_A\otimes I_B)$;
		we shall show that $X=\ket{i}_B\bra{i}_B$.
		By Lemma \ref{lemm:mirror} we have
		\begin{align*}
			\mathrm{tr}(B_j X)= & d\mathrm{tr}((\ket{i}_A\bra{i}_A\otimes B_j)\rho) \\
			= & d\mathrm{tr}(\ket{i}_A\bra{i}_A\mathrm{tr}_B((I_A\otimes B_j)\rho)) \\
			= & \mathrm{tr}(\ket{i}_A\bra{i}_A A_j) \\
			= & \mathrm{tr}(B_j \ket{i}_B\bra{i}_B)
		\end{align*}
		for all $j\in[d+1]$.
		Consider the following primal optimization problem:
		\begin{equation}\nonumber
		    \begin{aligned}
			\text{minimize }\quad & \mathrm{tr}(X \ket{i}_B\bra{i}_B) \\
			\text{subject to }\quad & \forall j\in[d+1]\Big(\mathrm{tr}(B_j X)=\mathrm{tr}(B_j \ket{i}_B\bra{i}_B)\Big) \\
			& X\geq 0
		\end{aligned}.
		\end{equation}
		Its dual problem is
		\begin{equation}\nonumber
		    \begin{aligned}
			\text{maximize }\quad & -\sum_j y_j\mathrm{tr}(\ket{i}_B\bra{i}_B B_j) \\
			\text{subject to }\quad & \ket{i}_B\bra{i}_B+\sum_j y_j B_j\geq 0
		\end{aligned}.
		\end{equation}
		Let
		\begin{align}
			\xi_d= & \frac{(d-1)\gamma_d-1}{(1-\gamma_d)(1-d\gamma_d)}, \\
			\zeta_d= & \frac{1}{1-d\gamma_d}.
		\end{align}
		{We now show that}
		\begin{align*}
			y_j= &
			\begin{cases}
				\xi_d, & j=i \\
				\zeta_d, & j\neq i {\text{ and }} j\leq d \\
				0, & j=d+1 \\
			\end{cases}
		\end{align*}
		is a dual feasible solution with {the objective function of value $1$. By the} construction we have
		\begin{align*}
			\ket{i}_B\bra{i}_B+\sum_j y_j B_j=\sum_{jk}b_{jk}\ket{j}_B\bra{{k}}_B,
		\end{align*}
		where
		\begin{align*}
			b_{jk}= &
			\begin{cases}
				0, & { j \text{ or k} =i} \\
				(\xi_d+\zeta_d(d-2))\gamma_d^2+\zeta_d(1-\gamma_d)^2, & {j=k\neq i} \\
				{(\xi_d+\zeta_d{(d-3)})\gamma_d^2-2\zeta_d\gamma_d(1-\gamma_d),} & {j\neq k \text{ and }j,k \neq i}
			\end{cases}.
		\end{align*}
		First, we have
		\begin{align*}
			&{(\xi_d+\zeta_d(d-3))\gamma_d^2-2\zeta_d\gamma_d(1-\gamma_d)}\\
			= & \left(\frac{(d-1)\gamma_d-1}{(1-\gamma_d)(1-d\gamma_d)}+\frac{d-1}{1-d\gamma_d}\right)\gamma_d^2- \frac{2\gamma_d}{1-d\gamma_d}\\
			= & \frac{\gamma_d}{(1-\gamma_d)(1-d\gamma_d)}\left((d-1)\gamma_d^2-\gamma_d+{d\gamma_d-d\gamma_d^2-\gamma_d+\gamma_d^2}-2+2\gamma_d\right)\\
                = & {\frac{\gamma_d}{(1-\gamma_d)(1-d\gamma_d)}\left(d\gamma_d-2\right)\leq 0,}
		\end{align*}
		where in the last line we employ the relation $\gamma_d\leq \frac{1}{d}$;
		this shows that the off-diagonal entries of the matrix defined by $\{b_{jk}\}_{j\neq k}$ are nonpositive.
		Then, by the following relation
		\begin{align*}
			& (\xi_d+\zeta_d(d-2))\gamma_d^2+\zeta_d(1-\gamma_d)^2+(d-2)\big[(\xi_d+\zeta_d(d-3))\gamma_d^2-2\zeta_d\gamma_d(1-\gamma_d)\big] \\
			= & \frac{1-(d-1)\gamma_d}{1-\gamma_d}\\
			\geq & 0,
		\end{align*}
		we conclude that the matrix representation of $\ket{i}_B\bra{i}_B+\sum_j y_j B_j$ is diagonally dominant,
		hence $\ket{i}_B\bra{i}_B+\sum_j y_j B_j$ is positive semidefinite, and we obtain the desired dual feasibility.
        Meanwhile, it can be verified that $-\sum_j y_j\mathrm{tr}(\ket{i}_B\bra{i}_B B_j)=1$ for any $i\in[d]$. According to the weak duality property, for any feasible $X$ in the primal optimization problem, it holds that $\mathrm{tr}(X \ket{i}_B\bra{i}_B)\geq1$. Since
		\begin{align}
			\mathrm{tr}(X)=\sum_{j=1}^{d+1}\mathrm{tr}(B_j X)=\sum_{j=1}^{d+1}\mathrm{tr}(B_j \ket{i}_B\bra{i}_B)=\mathrm{tr}(\ket{i}_B\bra{i}_B)=1,
		\end{align}
		this implies that $X=\ket{i}_B\bra{i}_B$.
	\end{proof}

	\begin{prop}
		$\rho={\frac{1}{d}}\sum_{ij}\ket{i}_A\ket{i}_B\bra{j}_A\bra{j}_B$.
	\end{prop}
	\begin{proof}
		By Lemma \ref{lemm:singularity},
		the state $\rho$ lies in the support of $\{\ket{1}_A\ket{1}_B,\cdots,\ket{d}_A\ket{d}_B\}$,
		which means that $\rho$ can be written as
		\begin{align*}
			\rho= & \sum_{jk}(\bra{j}_A\bra{j}_B\rho\ket{k}_A\ket{k}_B)\ket{j}_A\ket{j}_B\bra{k}_A\bra{k}_B.
		\end{align*}
		We now evaluate $\tilde{B}_i$ with this expansion of $\rho$ in order to apply Lemma \ref{lemm:mirror}:
		\begin{align*}
			\tilde{B}_i= & \mathrm{tr}_B((I_A\otimes B_i)\rho) \\
			= & \mathrm{tr}_B\left((I_A\otimes B_i)\sum_{jk}(\bra{j}_A\bra{j}_B\rho\ket{k}_A\ket{k}_B)\ket{j}_A\ket{j}_B\bra{k}_A\bra{k}_B\right) \\
			= & \sum_{jk}(\bra{j}_A\bra{j}_B\rho\ket{k}_A\ket{k}_B)(\bra{k}_B B_i\ket{j}_B)\ket{j}_A\bra{k}_A \\
			= & \sum_{jk}(\bra{j}_A\bra{j}_B\rho\ket{k}_A\ket{k}_B)(\bra{k}_A A_i\ket{j}_A)\ket{j}_A\bra{k}_A.
		\end{align*}
		Lemma \ref{lemm:mirror} states that
		\begin{align*}
			\tilde{B}_i= & {\frac{1}{d}} A_i,
		\end{align*}
		hence
		\begin{align*}
			(\bra{j}_A\bra{j}_B\rho\ket{k}_A\ket{k}_B)(\bra{k}_A A_i\ket{j}_A)=\bra{j}_A\tilde{B}_i\ket{k}_A={\frac{1}{d}} \bra{j}_A A_i\ket{k}_A.
		\end{align*}
		By the construction of the measurement operators $A_i$ we have
		\begin{align*}
			\bra{k}_A A_i\ket{j}_A=\bra{j}_A A_i\ket{k}_A\neq 0,
		\end{align*}
		hence the above identity implies
		\begin{align}
			\bra{j}_A\bra{j}_B\rho\ket{k}_A\ket{k}_B={\frac{1}{d}},
		\end{align}
		which means that $\rho$ is a $d\times d$ maximally entangled pure state.
	\end{proof}
    {This way, we have proved that if Alice and Bob generate the correlation $R$ in Eq.\eqref{correlation:maximal2} by locally measuring a $d\times d$ bipartite quantum state $\rho$, then $\rho$ must be a maximally entangled pure state, where the two local quantum measurements in Eqs.\eqref{correlation:dp1measspec1}-\eqref{correlation:dp1measspec4} can fulfill this task.}

    We now utilize the rescaling technique to certify general bipartite pure states with $(d+1)\times(d+1)$ correlations.
	\begin{theo}
		\label{correlation:generaldp1}
		Pick real numbers $\alpha_i>0$ and $\beta_i>0$ for $i=1,\cdots,d+1$.
		Let
		\begin{align*}
			\vec{\alpha}= & (\alpha_1,\cdots,\alpha_{d+1}), \\
			\vec{\beta}= & (\beta_1,\cdots,\beta_{d+1}).
		\end{align*}
		If
		\begin{align*}
			\tilde{R}= & \mathrm{diag}(\vec{\alpha})R\mathrm{diag}(\vec{\beta})
		\end{align*}
		is a valid correlation and can be obtained by locally measuring a $d\times d$ state $\rho$, then $\rho$ is a bipartite pure entangled state with the squared Schmidt coefficients same as the spectrum of the matrix
		\begin{align}
			\left(\sum_{i=1}^{d+1} \frac{\alpha_i}{\sqrt{d}} A_i\right)\left(\sum_{j=1}^{d+1} \frac{\beta_j}{\sqrt{d}} A_j\right),
		\end{align}
		where $A_i$ are defined as in {Eqs.}(\ref{correlation:dp1measspec1}) and (\ref{correlation:dp1measspec2}).
	\end{theo}

	\begin{proof}
		Notice that $\tilde{R}$ has the same PSD rank $d$ as $R$.
		Let $\{C_i\}$, $\{D_j\}$ be a $d\times d$ PSD factorization for $\tilde{R}$.
		Then $\{\alpha_i^{-1}C_i\}$ and $\{\beta_j^{-1}D_j\}$ form a $d\times d$ PSD factorization for $R$,
		{this} means that the PSD factorization $\{\alpha_i^{-1}C_i\}$, $\{\beta_j^{-1}D_j\}$
		is equivalent to a canonical form of PSD factorization $\{C'_i\}$, $\{D'_j\}$ of {$R$}
		such that $C'_i=D'_i=A_i/\sqrt{d}$ for any $i\in[d+1]$.
		The same equivalence transformation sends $\{C_i\}$, $\{D_j\}$
		to $\{\alpha_i C'_i\}$, $\{\beta_j D'_j\}$,
		hence the spectrum of the canonical form of PSD factorization for $\tilde{R}$ that is equivalent to $\{C_i\}$ and $\{D_j\}$ is the same as that of
		\begin{align*}
			S=\left(\sum_{i=1}^{d+1} \frac{\alpha_i}{\sqrt{d}} A_i\right)\left(\sum_{j=1}^{d+1} \frac{\beta_j}{\sqrt{d}} A_j\right),
		\end{align*}
		where we have utilized Lemma \ref{correlation:scsd}. Therefore, any bipartite $d\times d$ state $\rho$ that generates {$\tilde{R}$}
		must have its reduced states $\mathrm{tr}_A(\rho)$ and $\mathrm{tr}_B(\rho)$
		having the spectrum equal to that of $S$.

		Now let $A_i$ and $B_j$ be the local measurement operators that generate {$\tilde{R}$} upon measuring $\rho$;
		that is,
		\begin{align*}
			\mathrm{tr}((A_i\otimes B_j)\rho)={\tilde{R}_{ij}}.
		\end{align*}
		Then
		\begin{align*}
			\mathrm{tr}((\vec{\alpha}_i^{-1} A_i\otimes \vec{\beta}_j^{-1} B_j)\rho)={R_{ij}}.
		\end{align*}
		Let $S_A=\sum_{i=1}^{2d}\vec{\alpha}_i^{-1}A_i$ and $S_B=\sum_{j=1}^{2d}\vec{\beta}_j^{-1}B_j$.
		Then $\{S_A^{-1/2}(\vec{\alpha}_i^{-1} A_i)S_A^{-1/2}\}$ and $\{S_B^{-1/2}\vec{\beta}_j^{-1} B_j S_B^{-1/2}\}$
		are the operators of two valid quantum measurements as well;
		furthermore, they generate {$R$} upon measuring $(S_A^{1/2}\otimes S_B^{1/2})\rho(S_A^{1/2}\otimes S_B^{1/2})$, which can be verified to be a valid quantum state.
		This implies that the state $(S_A^{1/2}\otimes S_B^{1/2})\rho(S_A^{1/2}\otimes S_B^{1/2})$ is pure,
		hence $\rho$ is also pure.
		By the spectrum of the reduced density matrices of $\rho$, we conclude that $\rho$ is a bipartite pure state with the squared Schmidt coefficients same as the spectrum of $S$.
	\end{proof}

    Next we prove that general bipartite pure states can be certified with $(d+1)\times(d+1)$ correlations in an approximate manner, and we conjecture that exact certification for these quantum states is also possible.
	\begin{prop}
		The set of bipartite pure quantum states that can be certified via
		the construction in Theorem \ref{correlation:generaldp1} has dense Schmidt coefficient vectors $(\sqrt{\lambda_1},\cdots,\sqrt{\lambda_d})^T$ in the set
		\begin{align*}
			K=\left\{\vec{\lambda}\in\mathbb{R}^d|\lambda_i\geq 0,\sum_{i=1}^{d}\lambda_i=1,\lambda_1\geq\cdots\geq\lambda_d\right\}.
		\end{align*}
	\end{prop}

	\begin{proof}
		Suppose $(\lambda_1,\cdots,\lambda_d)\in K$.
		For any sufficiently small $\varepsilon>0$,
		let
		\begin{align*}
			\alpha_i= & \left(1-\frac{\varepsilon}{(d+1)^2}\right)(d\lambda_i), & \forall i\in[d], \\
			\alpha_{d+1}= & \frac{\varepsilon}{(d+1)^2}, \\
			\beta_i= & 1, & \forall i\in[d], \\
			\beta_{d+1}= & {d+2}.
		\end{align*}
		Then it can be verified that the resulting $\tilde{R}$ in Theorem \ref{correlation:generaldp1} is a valid correlation. The spectrum it certifies is then the same as that of the operator
		\begin{align*}
			S\equiv & \frac{1}{d}\left(\sum_{j=1}^{d+1} \beta_j A_j\right)^{{1/2}}\left(\sum_{i=1}^{d+1} \alpha_i A_i\right)\left(\sum_{j=1}^{d+1} \beta_j A_j\right)^{{1/2}} \\
			{= }& {\frac{1}{d}\left(I + (d+1) A_{d+1}\right)^{1/2}\left(\sum_{i=1}^{d+1} \alpha_i A_i\right)\left(I + (d+1) A_{d+1}\right)^{1/2}.}
		\end{align*}
		{Recall that $A_{d+1}= \frac{1}{d+1}\ket{\psi}_A \bra{\psi}_A$ as given in Eq .(\ref{correlation:dp1measspec3}), then we have $\left(I + (d+1) A_{d+1}\right)$ is a matrix with eigenvalues $d+1,1,... ,1.$ and the corresponding eigenvectors $\frac{1}{\sqrt{d}}\ket{\psi}_A, \ket{\psi_1},...,\ket{\psi_{d-1}}$, where $\{\frac{1}{\sqrt{d}}\ket{\psi}_A, \ket{\psi_{1}},...,\ket{\psi_{d-1}}\}$ forms an orthonormal basis for $\mathbb{C}^d$. Consequently, $\left(I + (d+1) A_{d+1}\right)^{1/2}$ can be expressed as $I + \frac{(d+1)(\sqrt{d+1}-1)}{d} A_{d+1}$. By straightforward calculation, it can be shown that for any $i\in [d]$, we have
		$$\left(I + (d+1) A_{d+1}\right)^{1/2}A_i \left(I + (d+1) A_{d+1}\right)^{1/2} = \ket{i}\bra{i}, $$
		hence $S$ could be written as
		\begin{align}
			S=&\left(1-\frac{\varepsilon}{(d+1)^2}\right)\mathrm{diag}(\lambda_1,\cdots,\lambda_d)+\frac{\varepsilon}{d(d+1)}A_{d+1}\\
            = &  \mathrm{diag}(\lambda_1,\cdots,\lambda_d) + H,
		\end{align}
		where $H$} is Hermitian with its operator norm $\lVert H\rVert$ satisfying
		$\lVert H\rVert\leq \frac{2}{(d+1)^2}\varepsilon$.
		By Weyl's perturbation theorem, we have
		\begin{align}
			\max_j |\mu^{\downarrow}_j(S)-\lambda_j|\leq \lVert H\rVert\leq \frac{2}{(d+1)^2}\varepsilon,
		\end{align}
		where $\mu^{\downarrow}_j(\cdot)$ denotes the $j$-th largest eigenvalue of the enclosed Hermitian operator.
		This shows that the spectrum of $S$ is within $\frac{2}{(d+1)^2}\varepsilon$ distance to $(\lambda_1,\cdots,\lambda_d)$ in sense of $\ell_{\text{op}}$ norm,
		which completes the proof.
	\end{proof}

    \bibliography{ref}% Produces the bibliography via BibTeX.

\begin{thebibliography}{10}
\expandafter\ifx\csname url\endcsname\relax
  \def\url#1{\texttt{#1}}\fi
\expandafter\ifx\csname urlprefix\endcsname\relax\def\urlprefix{URL }\fi
\providecommand{\bibinfo}[2]{#2}
\providecommand{\eprint}[2][]{\url{#2}}

\bibitem{chuang1997prescription}
\bibinfo{author}{Chuang, I.~L.} \& \bibinfo{author}{Nielsen, M.~A.}
\newblock \bibinfo{title}{Prescription for experimental determination of the
  dynamics of a quantum black box}.
\newblock \emph{\bibinfo{journal}{Journal of Modern Optics}}
  \textbf{\bibinfo{volume}{44}}, \bibinfo{pages}{2455--2467}
  (\bibinfo{year}{1997}).

\bibitem{poyatos1997complete}
\bibinfo{author}{Poyatos, J.}, \bibinfo{author}{Cirac, J.~I.} \&
  \bibinfo{author}{Zoller, P.}
\newblock \bibinfo{title}{Complete characterization of a quantum process: the
  two-bit quantum gate}.
\newblock \emph{\bibinfo{journal}{Physical Review Letters}}
  \textbf{\bibinfo{volume}{78}}, \bibinfo{pages}{390} (\bibinfo{year}{1997}).

\bibitem{sugiyama2013precision}
\bibinfo{author}{Sugiyama, T.}, \bibinfo{author}{Turner, P.~S.} \&
  \bibinfo{author}{Murao, M.}
\newblock \bibinfo{title}{Precision-guaranteed quantum tomography}.
\newblock \emph{\bibinfo{journal}{Physical Review Letters}}
  \textbf{\bibinfo{volume}{111}}, \bibinfo{pages}{160406}
  (\bibinfo{year}{2013}).

\bibitem{ekert1991quantum}
\bibinfo{author}{Ekert, A.~K.}
\newblock \bibinfo{title}{Quantum cryptography based on bell's theorem}.
\newblock \emph{\bibinfo{journal}{Physical review letters}}
  \textbf{\bibinfo{volume}{67}}, \bibinfo{pages}{661} (\bibinfo{year}{1991}).

\bibitem{barrett2005no}
\bibinfo{author}{Barrett, J.}, \bibinfo{author}{Hardy, L.} \&
  \bibinfo{author}{Kent, A.}
\newblock \bibinfo{title}{No signaling and quantum key distribution}.
\newblock \emph{\bibinfo{journal}{Physical review letters}}
  \textbf{\bibinfo{volume}{95}}, \bibinfo{pages}{010503}
  (\bibinfo{year}{2005}).

\bibitem{acin2006bell}
\bibinfo{author}{Acin, A.}, \bibinfo{author}{Gisin, N.} \&
  \bibinfo{author}{Masanes, L.}
\newblock \bibinfo{title}{From bell's theorem to secure quantum key
  distribution}.
\newblock \emph{\bibinfo{journal}{Physical review letters}}
  \textbf{\bibinfo{volume}{97}}, \bibinfo{pages}{120405}
  (\bibinfo{year}{2006}).

\bibitem{popescu1992states}
\bibinfo{author}{Popescu, S.} \& \bibinfo{author}{Rohrlich, D.}
\newblock \bibinfo{title}{Which states violate bell's inequality maximally?}
\newblock \emph{\bibinfo{journal}{Physics Letters A}}
  \textbf{\bibinfo{volume}{169}}, \bibinfo{pages}{411--414}
  (\bibinfo{year}{1992}).

\bibitem{mayers2004self}
\bibinfo{author}{Mayers, D.} \& \bibinfo{author}{Yao, A.}
\newblock \bibinfo{title}{Self testing quantum apparatus}.
\newblock \emph{\bibinfo{journal}{Quantum Information \& Computation}}
  \textbf{\bibinfo{volume}{4}}, \bibinfo{pages}{273--286}
  (\bibinfo{year}{2004}).

\bibitem{clauser1969proposed}
\bibinfo{author}{Clauser, J.~F.}, \bibinfo{author}{Horne, M.~A.},
  \bibinfo{author}{Shimony, A.} \& \bibinfo{author}{Holt, R.~A.}
\newblock \bibinfo{title}{Proposed experiment to test local hidden-variable
  theories}.
\newblock \emph{\bibinfo{journal}{Physical review letters}}
  \textbf{\bibinfo{volume}{23}}, \bibinfo{pages}{880} (\bibinfo{year}{1969}).

\bibitem{yang2013robust}
\bibinfo{author}{Yang, T.~H.} \& \bibinfo{author}{Navascu{\'e}s, M.}
\newblock \bibinfo{title}{Robust self-testing of unknown quantum systems into
  any entangled two-qubit states}.
\newblock \emph{\bibinfo{journal}{Physical Review A}}
  \textbf{\bibinfo{volume}{87}}, \bibinfo{pages}{050102}
  (\bibinfo{year}{2013}).

\bibitem{bamps2015sum}
\bibinfo{author}{Bamps, C.} \& \bibinfo{author}{Pironio, S.}
\newblock \bibinfo{title}{Sum-of-squares decompositions for a family of
  clauser-horne-shimony-holt-like inequalities and their application to
  self-testing}.
\newblock \emph{\bibinfo{journal}{Physical Review A}}
  \textbf{\bibinfo{volume}{91}}, \bibinfo{pages}{052111}
  (\bibinfo{year}{2015}).

\bibitem{salavrakos2016bell}
\bibinfo{author}{Salavrakos, A.} \emph{et~al.}
\newblock \bibinfo{title}{Bell inequalities for maximally entangled states}.
\newblock \emph{\bibinfo{journal}{arXiv preprint arXiv:1607.04578}}
  (\bibinfo{year}{2016}).

\bibitem{yang2014robust}
\bibinfo{author}{Yang, T.~H.}, \bibinfo{author}{V{\'e}rtesi, T.},
  \bibinfo{author}{Bancal, J.-D.}, \bibinfo{author}{Scarani, V.} \&
  \bibinfo{author}{Navascu{\'e}s, M.}
\newblock \bibinfo{title}{Robust and versatile black-box certification of
  quantum devices}.
\newblock \emph{\bibinfo{journal}{Physical review letters}}
  \textbf{\bibinfo{volume}{113}}, \bibinfo{pages}{040401}
  (\bibinfo{year}{2014}).

\bibitem{acin2002quantum}
\bibinfo{author}{Acin, A.}, \bibinfo{author}{Durt, T.}, \bibinfo{author}{Gisin,
  N.} \& \bibinfo{author}{Latorre, J.~I.}
\newblock \bibinfo{title}{Quantum nonlocality in two three-level systems}.
\newblock \emph{\bibinfo{journal}{Physical Review A}}
  \textbf{\bibinfo{volume}{65}}, \bibinfo{pages}{052325}
  (\bibinfo{year}{2002}).

\bibitem{mckague2014self}
\bibinfo{author}{McKague, M.}
\newblock \bibinfo{title}{Self-testing graph states}.
\newblock In \emph{\bibinfo{booktitle}{Theory of Quantum Computation,
  Communication, and Cryptography: 6th Conference, TQC 2011, Madrid, Spain, May
  24-26, 2011, Revised Selected Papers 6}}, \bibinfo{pages}{104--120}
  (\bibinfo{organization}{Springer}, \bibinfo{year}{2014}).

\bibitem{wu2014robust}
\bibinfo{author}{Wu, X.} \emph{et~al.}
\newblock \bibinfo{title}{Robust self-testing of the three-qubit w state}.
\newblock \emph{\bibinfo{journal}{Physical Review A}}
  \textbf{\bibinfo{volume}{90}}, \bibinfo{pages}{042339}
  (\bibinfo{year}{2014}).

\bibitem{pal2014device}
\bibinfo{author}{P{\'a}l, K.~F.}, \bibinfo{author}{V{\'e}rtesi, T.} \&
  \bibinfo{author}{Navascu{\'e}s, M.}
\newblock \bibinfo{title}{Device-independent tomography of multipartite quantum
  states}.
\newblock \emph{\bibinfo{journal}{Physical Review A}}
  \textbf{\bibinfo{volume}{90}}, \bibinfo{pages}{042340}
  (\bibinfo{year}{2014}).

\bibitem{vsupic2018self}
\bibinfo{author}{{\v{S}}upi{\'c}, I.}, \bibinfo{author}{Coladangelo, A.},
  \bibinfo{author}{Augusiak, R.} \& \bibinfo{author}{Ac{\'\i}n, A.}
\newblock \bibinfo{title}{Self-testing multipartite entangled states through
  projections onto two systems}.
\newblock \emph{\bibinfo{journal}{New journal of Physics}}
  \textbf{\bibinfo{volume}{20}}, \bibinfo{pages}{083041}
  (\bibinfo{year}{2018}).

\bibitem{baccari2020scalable}
\bibinfo{author}{Baccari, F.}, \bibinfo{author}{Augusiak, R.},
  \bibinfo{author}{{\v{S}}upi{\'c}, I.}, \bibinfo{author}{Tura, J.} \&
  \bibinfo{author}{Ac{\'\i}n, A.}
\newblock \bibinfo{title}{Scalable bell inequalities for qubit graph states and
  robust self-testing}.
\newblock \emph{\bibinfo{journal}{Physical review letters}}
  \textbf{\bibinfo{volume}{124}}, \bibinfo{pages}{020402}
  (\bibinfo{year}{2020}).

\bibitem{coladangelo2017all}
\bibinfo{author}{Coladangelo, A.}, \bibinfo{author}{Goh, K.~T.} \&
  \bibinfo{author}{Scarani, V.}
\newblock \bibinfo{title}{All pure bipartite entangled states can be
  self-tested}.
\newblock \emph{\bibinfo{journal}{Nature communications}}
  \textbf{\bibinfo{volume}{8}}, \bibinfo{pages}{15485} (\bibinfo{year}{2017}).

\bibitem{vsupic2023quantum}
\bibinfo{author}{{\v{S}}upi{\'c}, I.}, \bibinfo{author}{Bowles, J.},
  \bibinfo{author}{Renou, M.-O.}, \bibinfo{author}{Ac{\'\i}n, A.} \&
  \bibinfo{author}{Hoban, M.~J.}
\newblock \bibinfo{title}{Quantum networks self-test all entangled states}.
\newblock \emph{\bibinfo{journal}{Nature Physics}}
  \textbf{\bibinfo{volume}{19}}, \bibinfo{pages}{670--675}
  (\bibinfo{year}{2023}).

\bibitem{sikora2016minimum}
\bibinfo{author}{Sikora, J.}, \bibinfo{author}{Varvitsiotis, A.} \&
  \bibinfo{author}{Wei, Z.}
\newblock \bibinfo{title}{Minimum dimension of a hilbert space needed to
  generate a quantum correlation}.
\newblock \emph{\bibinfo{journal}{Physical Review Letters}}
  \textbf{\bibinfo{volume}{117}}, \bibinfo{pages}{060401}
  (\bibinfo{year}{2016}).

\bibitem{zhang2012quantum}
\bibinfo{author}{Zhang, S.}
\newblock \bibinfo{title}{Quantum strategic game theory}.
\newblock In \emph{\bibinfo{booktitle}{Proceedings of the 3rd Innovations in
  Theoretical Computer Science Conference}}, \bibinfo{pages}{39--59}
  (\bibinfo{year}{2012}).

\bibitem{hensen2015loophole}
\bibinfo{author}{Hensen, B.} \emph{et~al.}
\newblock \bibinfo{title}{Loophole-free bell inequality violation using
  electron spins separated by 1.3 kilometres}.
\newblock \emph{\bibinfo{journal}{Nature}} \textbf{\bibinfo{volume}{526}},
  \bibinfo{pages}{682--686} (\bibinfo{year}{2015}).

\bibitem{shalm2015strong}
\bibinfo{author}{Shalm, L.~K.} \emph{et~al.}
\newblock \bibinfo{title}{Strong loophole-free test of local realism}.
\newblock \emph{\bibinfo{journal}{Physical review letters}}
  \textbf{\bibinfo{volume}{115}}, \bibinfo{pages}{250402}
  (\bibinfo{year}{2015}).

\bibitem{giustina2015significant}
\bibinfo{author}{Giustina, M.} \emph{et~al.}
\newblock \bibinfo{title}{Significant-loophole-free test of bell¡¯s theorem
  with entangled photons}.
\newblock \emph{\bibinfo{journal}{Physical review letters}}
  \textbf{\bibinfo{volume}{115}}, \bibinfo{pages}{250401}
  (\bibinfo{year}{2015}).

\bibitem{fiorini2015exponential}
\bibinfo{author}{Fiorini, S.}, \bibinfo{author}{Massar, S.},
  \bibinfo{author}{Pokutta, S.}, \bibinfo{author}{Tiwary, H.~R.} \&
  \bibinfo{author}{De~Wolf, R.}
\newblock \bibinfo{title}{Exponential lower bounds for polytopes in
  combinatorial optimization}.
\newblock \emph{\bibinfo{journal}{Journal of the ACM (JACM)}}
  \textbf{\bibinfo{volume}{62}}, \bibinfo{pages}{1--23} (\bibinfo{year}{2015}).

\bibitem{gouveia2013lifts}
\bibinfo{author}{Gouveia, J.}, \bibinfo{author}{Parrilo, P.~A.} \&
  \bibinfo{author}{Thomas, R.~R.}
\newblock \bibinfo{title}{Lifts of convex sets and cone factorizations}.
\newblock \emph{\bibinfo{journal}{Mathematics of Operations Research}}
  \textbf{\bibinfo{volume}{38}}, \bibinfo{pages}{248--264}
  (\bibinfo{year}{2013}).

\bibitem{jain2013efficient}
\bibinfo{author}{Jain, R.}, \bibinfo{author}{Shi, Y.}, \bibinfo{author}{Wei,
  Z.} \& \bibinfo{author}{Zhang, S.}
\newblock \bibinfo{title}{Efficient protocols for generating bipartite
  classical distributions and quantum states}.
\newblock \emph{\bibinfo{journal}{IEEE Transactions on Information Theory}}
  \textbf{\bibinfo{volume}{59}}, \bibinfo{pages}{5171--5178}
  (\bibinfo{year}{2013}).

\bibitem{nielsen2010quantum}
\bibinfo{author}{Nielsen, M.~A.} \& \bibinfo{author}{Chuang, I.~L.}
\newblock \emph{\bibinfo{title}{Quantum computation and quantum information}}
  (\bibinfo{publisher}{Cambridge university press}, \bibinfo{year}{2010}).

\bibitem{chen2023generations}
\bibinfo{author}{Chen, Z.}, \bibinfo{author}{Lin, L.}, \bibinfo{author}{Lin,
  X.}, \bibinfo{author}{Wei, Z.} \& \bibinfo{author}{Yao, P.}
\newblock \bibinfo{title}{The generations of classical correlations via quantum
  schemes}.
\newblock \emph{\bibinfo{journal}{IEEE Transactions on Information Theory}}
  \textbf{\bibinfo{volume}{70}}, \bibinfo{pages}{4160--4169}
  (\bibinfo{year}{2024}).

\bibitem{10.1007/s10107-015-0922-1}
\bibinfo{author}{Fawzi, H.}, \bibinfo{author}{Gouveia, J.~a.},
  \bibinfo{author}{Parrilo, P.~A.}, \bibinfo{author}{Robinson, R.~Z.} \&
  \bibinfo{author}{Thomas, R.~R.}
\newblock \bibinfo{title}{Positive semidefinite rank}.
\newblock \emph{\bibinfo{journal}{Math. Program.}}
  \textbf{\bibinfo{volume}{153}}, \bibinfo{pages}{133¨C177}
  (\bibinfo{year}{2015}).
\newblock \urlprefix\url{https://doi.org/10.1007/s10107-015-0922-1}.

\end{thebibliography}


\begin{thebibliography}{2}
\expandafter\ifx\csname natexlab\endcsname\relax\def\natexlab#1{#1}\fi
\expandafter\ifx\csname bibnamefont\endcsname\relax
  \def\bibnamefont#1{#1}\fi
\expandafter\ifx\csname bibfnamefont\endcsname\relax
  \def\bibfnamefont#1{#1}\fi
\expandafter\ifx\csname citenamefont\endcsname\relax
  \def\citenamefont#1{#1}\fi
\expandafter\ifx\csname url\endcsname\relax
  \def\url#1{\texttt{#1}}\fi
\expandafter\ifx\csname urlprefix\endcsname\relax\def\urlprefix{URL }\fi
\providecommand{\bibinfo}[2]{#2}
\providecommand{\eprint}[2][]{\url{#2}}

\bibitem[{\citenamefont{Chen et~al.}(2023)\citenamefont{Chen, Lin, Lin, Wei,
  and Yao}}]{chen2023generations}
\bibinfo{author}{\bibfnamefont{Z.}~\bibnamefont{Chen}},
  \bibinfo{author}{\bibfnamefont{L.}~\bibnamefont{Lin}},
  \bibinfo{author}{\bibfnamefont{X.}~\bibnamefont{Lin}},
  \bibinfo{author}{\bibfnamefont{Z.}~\bibnamefont{Wei}}, \bibnamefont{and}
  \bibinfo{author}{\bibfnamefont{P.}~\bibnamefont{Yao}},
  \bibinfo{journal}{arXiv preprint arXiv:2304.12690}  (\bibinfo{year}{2023}).

\bibitem[{\citenamefont{Diamond and Boyd}(2016)}]{diamond2016cvxpy}
\bibinfo{author}{\bibfnamefont{S.}~\bibnamefont{Diamond}} \bibnamefont{and}
  \bibinfo{author}{\bibfnamefont{S.}~\bibnamefont{Boyd}}, \bibinfo{journal}{The
  Journal of Machine Learning Research} \textbf{\bibinfo{volume}{17}},
  \bibinfo{pages}{2909} (\bibinfo{year}{2016}).

\end{thebibliography}

    \end{document}